\documentclass[11pt]{article}
\usepackage{amssymb,latexsym,amsmath, amsthm}
\usepackage{amsfonts, graphics}
\usepackage{hyperref}
\arraycolsep = 0.3\arraycolsep
\sloppy

\def\R{\mathbb R}

\def\N{\mathbb N}

\def\supp{\mathrm{supp}\,}

\def\T{{\mathcal T}}

\def\be{\begin{equation}}
\def\ee{\end{equation}}
\def\bea{\begin{eqnarray}}
\def\eea{\end{eqnarray}}
\def\beas{\begin{eqnarray*}}
\def\eeas{\end{eqnarray*}}

\sloppy
\newtheorem{theorem}{Theorem}[section]
\newtheorem{corollary}{Corollary}

\newtheorem{lemma}[theorem]{Lemma}
\newtheorem{proposition}{Proposition}

\theoremstyle{definition}
\newtheorem{definition}[theorem]{Definition}
\newtheorem{remark}{Remark}

\begin{document}

\title{On the transport operators arising from linearizing the
  Vlasov-Poisson or Einstein-Vlasov system about isotropic steady states}
\author{Gerhard Rein, Christopher Straub\\
        Fakult\"at f\"ur Mathematik, Physik und Informatik\\
        Universit\"at Bayreuth\\
        D-95440 Bayreuth, Germany\\
        gerhard.rein@uni-bayreuth.de, christopher.straub@uni-bayreuth.de}

\maketitle

\begin{abstract}
	If the Vlasov-Poisson or Einstein-Vlasov system is linearized about
	an isotropic steady state, a linear operator arises the properties of which
	are relevant in the linear as well as nonlinear stability analysis of the
	given steady state.
	We prove that when defined on a suitable Hilbert space and
	equipped with the proper domain of definition
	this transport operator ${\mathcal T}$ is skew-adjoint, i.e.,
	${\mathcal T}^\ast = - {\mathcal T}$.
	In the Vlasov-Poisson case we also determine the kernel
	of this operator.
\end{abstract}

\maketitle
\section{Introduction}
The Vlasov-Poisson and Einstein-Vlasov systems describe a self-gravitating,
collisionless gas in the framework of Newtonian mechanics or General Relativity,
respectively. For an important class of steady
states of these systems the particle
density on phase space depends only on the local or particle
energy, i.e., $f_0(x,v)=\phi(E(x,v))$ with some given function $\phi$.
The particle energy $E$ is a $C^2$ function of position $x\in {\mathbb R}^3$ and
velocity (or momentum) $v\in {\mathbb R}^3$, determined by the steady state;
details follow below. We study the linear transport operator 
\begin{equation} \label{Tclassdef}
{\mathcal T} f := \{f,E\} = {\partial_v} E \cdot {\partial_x} f - {\partial_x} E \cdot {\partial_v} f .
\end{equation}
Here $\{\cdot,\cdot\}$ denotes the usual Poisson bracket,
${\partial_x}$ and ${\partial_v}$ are gradients with respect to the indicated
variable, and $\cdot$ denotes the Euclidean scalar product.
The operator ${\mathcal T}$ arises in the stability analysis of the steady state
by linearization of the corresponding system.
It is skew-symmetric with respect to the $L^2$ scalar
product on the proper domain in phase space. As is amply documented in the
literature, it is in general not obvious how to realize such an operator
as skew-adjoint, which is the desired property from
a functional analysis point of view, cf.\ \cite[13.4 Example]{Rud}.
Since ${\mathcal T}$ has come up in various places in the literature
\cite{GuLi,HaLiRe,LeMeRa1}
without the above question having been properly addressed, we
give a careful proof that  ${\mathcal T}$ is skew-adjoint when defined
on a suitable Hilbert space and equipped
with the proper domain. In addition, we determine the kernel
of this operator in the Vlasov-Poisson case;
the analogous result in the relativistic case is open.

We now discuss the two systems in more detail.
As motivation and background we note that in astrophysics,
a self-gravitating, collisionless gas is used to model galaxies
or globular clusters, cf.~\cite{BT}.
In the context of Newtonian physics such an ensemble is described
by the Vlasov-Poisson system:
\begin{equation}
\partial_{t}f+v\cdot {\partial_x} f - {\partial_x} U\cdot {\partial_v} f =0,
\label{vlasov}
\end{equation}
\begin{equation}
\Delta U = 4\pi \rho,\ \lim_{|x| \to \infty} U(t,x) = 0, \label{poisson}
\end{equation}
\begin{equation}
\rho(t,x) = \int f(t,x,v)\,dv. \label{rhodef} 
\end{equation}
Here $t\in {\mathbb R},\ x,v\in {\mathbb R}^3$ stand for time, position, and velocity,
$f=f(t,x,v)\geq 0$ is the density of the ensemble on phase space,
$\rho=\rho(t,x)$ is the corresponding spatial density,
$U=U(t,x)$ denotes the induced gravitational potential, and unless
explicitly stated otherwise, integrals extend over ${\mathbb R}^3$.
For background on this system we refer to \cite{Rein07}.

When describing the same physical situation in the general relativistic
context the role of the potential is taken over by the Lorentz metric
on spacetime. We restrict
ourselves to the case of spherical symmetry
and write the metric in Schwarzschild form
\[
ds^2=-e^{2\mu(t,r)}dt^2 + e^{2\lambda(t,r)}dr^2+
r^2(d\theta^2+\sin^2\theta\,d\phi^2).
\]
Here $t\in{\mathbb R}$ is the time coordinate, $r\in [0, \infty[$ is the
area radius, i.e., $4 \pi r^2$ is the area of the orbit of the
symmetry group $\mathrm{SO}(3)$ labeled by $r$, and the angles
$\theta\in[0, \pi]$ and $\phi\in[0, 2\pi]$ parametrize these
orbits.
The spacetime is required to be asymptotically flat
with a regular center, which corresponds to the boundary conditions
\begin{equation} \label{bc}
\lim_{r\to \infty} \lambda(t,r)= \lim_{r\to \infty} \mu(t,r) = 0 = \lambda(t,0).
\end{equation}
We write
$(x^a)=r(\sin \theta \cos\phi, \sin \theta \sin \phi, \cos\theta)$,
let $(p^\alpha)$ denote the canonical momenta
corresponding to the spacetime coordinates $(x^\alpha) = (t,x^1,x^2,x^3)$,
and define
\[
v^a := p^a + (e^\lambda - 1) \frac{x\cdot p}{r} \frac{x^a}{r}
\ \mbox{for}\ a=1,2,3;
\]
as above $\cdot$ stands for the Euclidean scalar product in ${\mathbb R}^3$.
As in the Newtonian case all particles
have the same rest mass, normalized to unity.
Then 
\[
p_0 = - e^\mu \sqrt{1+|v|^2},\ \mbox{where}\ |v|^2 = v\cdot v.
\]
The density function $f=f(t,x,v)\geq 0$ is to be
\emph{spherically symmetric}, i.e., for any rotation $A\in \mathrm{SO}(3)$,
$f(t,x,v)=f(t,Ax,Av)$.
In this set-up the Einstein-Vlasov system becomes
\begin{equation} \label{vlasovr}
{\partial_t} f + e^{\mu - \lambda}\frac{v}{{\sqrt{1+|v|^2}}}\cdot {\partial_x} f -
\left( {\partial_t} \lambda \frac{x\cdot v}{r} + e^{\mu - \lambda} \partial_r\mu
{\sqrt{1+|v|^2}} \right) \frac{x}{r} \cdot {\partial_v} f =0,
\end{equation}
\begin{align*}
e^{-2\lambda} (2 r \partial_r\lambda -1) +1 
&=
8\pi r^2 \rho ,\\
e^{-2\lambda} (2 r \partial_r\mu +1) -1 
&= 
8\pi r^2 p, 
\end{align*}
\begin{align*}
\rho(t,r) 
&=
\rho(t,x) = \int {\sqrt{1+|v|^2}} f(t,x,v)\,dv ,\\
p(t,r) 
&= 
p(t,x) = \int \left(\frac{x\cdot v}{r}\right)^2
f(t,x,v)\frac{dv}{{\sqrt{1+|v|^2}}}. 
\end{align*}
For a detailed derivation of these
equations we refer to \cite{Rein95}; note that we have written down
only a closed subsystem of the over-determined Einstein-Vlasov
system. For more background on the latter we refer to \cite{And05}.

Both systems possess a plethora of steady states.
An important class of spherically symmetric
such states is obtained via the ansatz
\begin{equation} \label{ansatz}
f_0 = \phi\circ E,
\end{equation}
where $\phi\colon {\mathbb R} \to [0,\infty[$ is a suitable ansatz function
and $E$ is the local or particle energy, i.e.,
\begin{equation} \label{Edef}
E(x,v) = \left\{
\begin{array}{ll}
\displaystyle \frac{1}{2} |v|^2 + U_0(x),&
\ \mbox{Newtonian case},\\
\displaystyle - p_0 = e^{\mu_0(x)} \sqrt{1+|v|^2},&
\ \mbox{relativistic case}.
\end{array}
\right.
\end{equation}
Here $U_0$ or $\mu_0$ is the time independent
potential or metric component of the steady state, and it is easy to check
that $f_0$ satisfies the corresponding Vlasov equation
\eqref{vlasov} or \eqref{vlasovr}.
In order to obtain a steady state one can substitute the
ansatz \eqref{ansatz} into the definition of $\rho$ (and $p$) and try
to solve the resulting non-linear problem for $U_0$ or $\mu_0$;
we refer to \cite{RaRe} for sufficient conditions on the ansatz function
$\phi$ which guarantee that suitable steady states result from this
procedure.

We now fix such a steady state and linearize the Vlasov-Poisson system
about it. If we substitute $f_0 + f$ for $f$ and drop the quadratic term in
$f$ in the Vlasov equation \eqref{vlasov}, the linearized equation becomes
\[
{\partial_t} f + v\cdot {\partial_x} f - {\partial_x} U_0 \cdot {\partial_v} f + \ldots = 0,
\]
where $\ldots$ stands for a term which depends non-locally on $f$,
i.e., on the field induced by $f$, and which is of no interest here.
The transport term induced by the steady state
is ${\mathcal T} f = \{f, E\}$.
In a stability analysis,
the operator typically acts on the weighted space $L^2_\chi (S_0)$,
where $S_0=\{(x,v)\in {\mathbb R}^3 \times {\mathbb R}^3 \mid f_0(x,v)>0\}$ so that
$\bar S_0$ is the support of the steady state, a compact set, and
$\chi= 1/|\phi'\circ E|$; throughout the paper functions are real-valued.
The weight $\chi$ arises naturally
if one computes the second variation of the so-called energy-Casimir functional
at a given steady state, and this second variation is used
to define the distance measure for the stability estimate, cf.\ \cite{GuRe2007};
the results of the present paper remain true if we take $\chi=1$.
The crucial point for our analysis is to give the proper weak formulation of
the operator ${\mathcal T}$ and define a domain for it such that it
is not only skew-symmetric, but skew-adjoint.
We will also determine the kernel of this operator.

If we proceed analogously for the Einstein-Vlasov system, the linearized
equation reads
\begin{equation} \label{linvlr}
{\partial_t} f + e^{\mu_0 - \lambda_0}\frac{v}{{\sqrt{1+|v|^2}}}\cdot {\partial_x} f -
e^{\mu_0 - \lambda_0}
{\sqrt{1+|v|^2}}  {\partial_x} \mu_0 \cdot {\partial_v} f + \ldots =0;
\end{equation}
the analogous comment as in the Newtonian case applies to the term
$\ldots$. Here the linear transport operator equals $e^{- \lambda_0} {\mathcal T}$.
But at the same time the natural weight in the $L^2$ space on which
the operator acts is $\chi=e^{\lambda_0}/|\phi'\circ E|$. Hence
the true relativistic transport operator $e^{- \lambda_0} {\mathcal T}$
on the properly weighted space is skew-adjoint iff
${\mathcal T}$ is skew-adjoint on the space with weight
$\chi=1/|\phi'\circ E|$. Thus we keep to the operator ${\mathcal T}$ as defined before,
since this allows us to treat the Newtonian and the relativistic case
largely in parallel.

In the context of both linear and nonlinear stability results
the properties of the operator ${\mathcal T}$
play a role at various places
in the mathematics literature, cf.~\cite{GuLi,HaLiRe,LeMeRa1},
but also in the physics literature, cf.~\cite{IT68}, where ${\mathcal T}$
appears both in the Newtonian and the relativistic case.
We also refer to the second author's master thesis \cite{Str2019}.

In the next section we formulate the basic assumptions and main
results of this paper. The skew-adjointness of ${\mathcal T}$ is proven
in Section~3, and in the Newtonian case its kernel is determined in Section~4.

\section{Basic assumptions and main results}

\begin{itemize}
	\item[(A1)]
	Let $\phi \colon {\mathbb R} \to [0,\infty[$ be 
	such that there exists a constant
	$E_0$ such that $\phi \in C^1(]-\infty,E_0[)$ with 
	$\phi ' (E) < 0$ for $E < E_0$, and $\phi(E)=0$ for $E\geq E_0$.
	In the Newtonian case, $E_0<0$, in the relativistic case, $0<E_0<1$.
	\item[(A2)]
	Let $f_0=\phi\circ E$ be a steady state of the Vlasov-Poisson or
	Einstein-Vlasov system, where $E$ is defined as in \eqref{Edef}
	and with corresponding, spherically symmetric
	potential $U_0\in C^2({\mathbb R}^3)\cap C^2([0,\infty[)$ or metric components
	$\lambda_0, \mu_0\in C^2({\mathbb R}^3)\cap C^2([0,\infty[)$. Let these satisfy the
	boundary conditions specified in \eqref{poisson} or \eqref{bc} respectively.
\end{itemize}
At this point we identify $U_0(x) = U_0(|x|)$ if $U_0$ is spherically symmetric.
As shown in \cite{RaRe}, many steady states satisfy these assumptions.
In addition,
$U_0$ and $\mu_0$ are strictly increasing as radial functions.
The set where $f_0$ is positive is given by
\[
S_0 = \{ (x,v) \in {\mathbb R}^3 \times {\mathbb R}^3 \mid E(x,v) < E_0 \}.
\]
Due to the cut-off energy $E_0$ in (A1) and the boundary condition
for $U_0$ respectively $\mu_0$ at infinity $S_0$ is
a bounded, spherically symmetric domain; note that the restriction on
$E_0$ is such that by \eqref{Edef}, $E(x,v) \geq E_0$ in both
the Newtonian or the relativistic case, if $|x|$ or $|v|$ are sufficiently
large.
On functions $f\in C^1(S_0)$ we define ${\mathcal T} f$ as in \eqref{Tclassdef}.
The characteristic flow
of the first order differential operator ${\mathcal T}$ is given by
the Hamiltonian system
\begin{equation} \label{charsys}
\dot x = {\partial_v} E(x,v),\ \dot v = - {\partial_x} E(x,v).
\end{equation}
For each $(x,v) \in {\mathbb R}^3 \times {\mathbb R}^3$ it has a global, unique solution
$t\mapsto (X,V)(t,x,v)$
such that $(X,V)(0,x,v)=(x,v)$. For every $t\in{\mathbb R}$ the map
$(X,V)(t,\cdot,\cdot)$ is a $C^1$ diffeomorphism
on ${\mathbb R}^3 \times {\mathbb R}^3$ which is measure-preserving,
i.e., $\det \frac{\partial(X,V)}{\partial(x,v)}(t,x,v) =1$,
since the divergence of the
right hand side of \eqref{charsys} vanishes. The characteristic flow
preserves the particle energy $E$, in particular
the set $S_0$ is invariant under the flow.
Due to the spherical symmetry of the potential or metric
of the steady state the quantity
\[
L(x,v):= |x\times v|^2,
\]
the modulus of angular momentum
squared of the particle with coordinates $(x,v) \in {\mathbb R}^3 \times {\mathbb R}^3$,
is also preserved by the characteristic flow.
For $f\in C^1(S_0)$ we can express the transport operator as
\begin{equation} \label{Tinfgen}
{\mathcal T} f(x,v) = \frac{d}{dt}  \left[ f(X(t,x,v), V(t,x,v)) \right]\big|_{t=0},
\quad (x,v) \in S_0.
\end{equation}

\begin{remark}
	The characteristic flow corresponding to the linearized Einstein-Vlasov
	system as written in \eqref{linvlr} is not measure-preserving, due to the
	additional factor $e^{-\lambda_0}$ in the right hand side of the
	corresponding characteristic system. However, this flow does preserve
	the weighted phase-space measure $e^{\lambda_0(x)}dv\,dx$, which is then
	taken into account by the modified weight in the corresponding
	$L^2$ space on which $e^{-\lambda_0}{\mathcal T}$ would act.
	We drop the factor $e^{-\lambda_0}$ until the
	skew-adjointness of ${\mathcal T}$ is proven and then restore it.
\end{remark}

The following integration by parts formula
shows that the transport operator ${\mathcal T}$ is skew-symmetric
with respect to weighted $L^2$ scalar products, at least when
defined on smooth functions; it should be noticed that no boundary
terms appear in this formula.

\begin{proposition} \label{transibp}
	Let $\chi \in C (]-\infty , E_0[)$.
	Then, for any $f,g \in C^1(S_0)$,
	\[
	\int_{S_0} \chi\circ E\, f\, {\mathcal T} g = - \int_{S_0} \chi\circ E\, {\mathcal T} f\, g,
	\]
	provided both integrals exist which for example is the case
	if $f \in C^1_c(S_0)$.
\end{proposition}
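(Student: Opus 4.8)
The plan is to prove the identity by means of the characteristic flow rather than by a direct integration by parts, the point being that the weight $\chi$ is only assumed continuous and can therefore not be differentiated. The basic observations are that the energy $E$, and hence $\chi\circ E$, is constant along the flow $(X,V)(t,\cdot,\cdot)$ of \eqref{charsys}, that this flow is measure preserving, and that it leaves $S_0$ invariant; all three facts are recorded just before the proposition. I would first treat the case $f\in C^1_c(S_0)$, for which existence of both integrals is guaranteed, and set
\[
\Phi(t) := \int_{S_0} \chi\circ E\,(fg)\bigl(X(t,x,v),V(t,x,v)\bigr)\,d(x,v).
\]

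The first step is to show that $\Phi$ is constant. Substituting $(x,v)\mapsto(X(t,x,v),V(t,x,v))$, whose Jacobian determinant equals one, and using that $E$ is conserved and $S_0$ is flow-invariant, the integral transforms back into $\Phi(0)=\int_{S_0}\chi\circ E\,fg$. This is exactly the place where the absence of boundary terms is encoded: since the flow maps $S_0$ bijectively onto itself, no flux across $\partial S_0$ can arise, so no boundary contribution survives.

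The second step is to differentiate $\Phi$ at $t=0$. As $\Phi\equiv\Phi(0)$ its derivative vanishes, while differentiating under the integral sign and combining the product rule with the representation \eqref{Tinfgen}, i.e.\ $\frac{d}{dt}f(X(t,x,v),V(t,x,v))\big|_{t=0}={\mathcal T}f$, gives
\[
0=\Phi'(0)=\int_{S_0}\chi\circ E\,\bigl(g\,{\mathcal T}f+f\,{\mathcal T}g\bigr),
\]
which is the claimed formula.

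The main obstacle is the rigorous justification of the differentiation under the integral sign, and this is precisely where the compact support of $f$ is used. Since $\supp f$ is a compact subset of $S_0$, it lies in $\{E\le E_0-\delta\}$ for some $\delta>0$; as $E$ is conserved, the nonzero part of the integrand remains, for $t$ near $0$, inside the fixed compact set $\{E\le E_0-\delta\}\subset S_0$ on which $\chi\circ E$ is bounded and the flow together with $f,g$ and their first derivatives are continuous and bounded. This yields a uniform integrable majorant and legitimizes the interchange of derivative and integral. It is worth noting that the naive alternative of writing the integrand as $\chi\circ E\,\mathrm{div}_{(x,v)}(fg\,W)$ with $W=({\partial_v}E,-{\partial_x}E)$ and applying the divergence theorem would require $\chi\in C^1$ in order to exploit ${\mathcal T}(\chi\circ E)=\chi'(E)\,{\mathcal T}E=0$; the flow argument circumvents this and needs only continuity of $\chi$. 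For general $f,g\in C^1(S_0)$ with both integrals convergent the same computation applies, the compact-support hypothesis being needed only to secure the domination.
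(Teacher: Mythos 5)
Your argument is essentially the paper's own: both proofs exploit that the characteristic flow of \eqref{charsys} is measure preserving, conserves $E$ (hence $\chi\circ E$, with no differentiability of $\chi$ required), and leaves the relevant domain invariant, so that $t\mapsto\int \chi\circ E\,(fg)\circ(X,V)(t,\cdot)$ is constant; differentiating under the integral at $t=0$ via \eqref{Tinfgen} then yields the identity, and the flow invariance is indeed the reason no boundary terms appear. For $f\in C^1_c(S_0)$ your justification of the interchange of derivative and integral is complete and correct.

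The one genuine gap is your final sentence. For general $f,g\in C^1(S_0)$ with both integrals merely assumed to exist, ``the same computation'' does not apply: you yourself identify compact support as the ingredient that produces the integrable majorant, and without it the interchange of $\frac{d}{dt}$ and $\int_{S_0}$ is precisely what is missing --- $f$, $g$, their derivatives, and $\chi\circ E$ may all be unbounded near $\partial S_0$, where $E\to E_0$ and $\chi$ need not extend continuously. The paper closes this by proving the identity first on the exhausting sets $S_n=\{z\in{\mathbb R}^6\mid E(z)<E_0-1/n\}$: these are flow invariant (again because $E$ is conserved), their closures are compact subsets of $S_0$ on which everything is bounded, so the differentiation under the integral is legitimate there; the hypothesis that both integrals exist is then used to pass from $\int_{S_n}$ to $\int_{S_0}$ as $n\to\infty$. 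Replacing your last sentence by this truncation-and-limit argument (or an equivalent exhaustion) makes the proof complete; everything else stands.
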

\begin{proof}
	We abbreviate $z=(x,v) \in {\mathbb R}^3 \times {\mathbb R}^3$ and define
	$S_n =\{z \in {\mathbb R}^6 \mid E(z) < E_0-1/n\}$.
	The properties of the characteristic flow and a change of
	variables imply that for every $t\in{\mathbb R}$ and $n\in {\mathbb N}$,
	\[
	\int_{S_n} \chi( E(z)) \,f((X,V)(t,z)) \, g((X,V)(t,z))\, dz =
	\int_{S_n} \chi( E(z)) \,  f(z) \, g(z)\, dz.
	\]
	The restriction to $S_n$ and the properties of $f, g$, and $\chi$
	guarantee that the left hand side can be differentiated with
	respect to $t$ under the integral. Thus
	\[
	\begin{split}
	0 &= \frac{d}{dt}  \int_{S_n} \chi( E(z)) \,  f((X,V)(t,z)) \,
	g((X,V)(t,z))\, dz \big|_{t=0} \\
	&= \int_{S_n} \chi( E(z)) \,
	\left( {\mathcal T} f(z)\, g(z) + f(z)\, {\mathcal T} g(z) \right)\, dz.
	\end{split}
	\]
	Hence the assertion follows with $S_n$ instead of $S_0$,
	and taking $n\to \infty$ completes the proof.
\end{proof}

A measurable function
$f \colon {\mathbb R}^3 \times {\mathbb R}^3\to {\mathbb R}$ is
\emph{spherically symmetric}, iff for every $A\in \mathrm{SO}(3)$
the identity $f(Ax,Av) = f(x,v)$ holds for a.~a.\
$(x,v)\in{\mathbb R}^3 \times {\mathbb R}^3$;
the exceptional set of measure zero may depend on $A$.
Functions defined on $S_0$ will always be extended by $0$
to ${\mathbb R}^3 \times {\mathbb R}^3$
so that spherical symmetry is also defined for them.
The subspace of spherically symmetric
functions in a certain function space is denoted by the
subscript $r$, for example $C^1_{c,r}(S_0)$ denotes the
space of compactly supported, spherically symmetric $C^1$ functions
on the set $S_0$.

We now give the definition of the transport operator which will
make it skew-adjoint.

\begin{definition} \phantomsection \label{Tweakdef}     
	\begin{itemize}
		\item[(a)]
		For $f \in L^1_{\mathrm{loc},r} (S_0)$, ${{\mathcal T} f}$ \emph{exists weakly}
		if there exists $\mu \in L^1_{\mathrm{loc},r} (S_0)$ such that
		for every test function $\xi \in C^\infty_{c,r}(S_0)$,
		\[
		\int_{S_0} \frac 1 {\vert \phi ' \circ E \vert} f\,  {\mathcal T} \xi
		= - \int_{S_0} \frac 1 {\vert \phi ' \circ E \vert} \mu\, \xi .
		\]
		In this case ${\mathcal T} f := \mu$ \emph{weakly}.
		\item[(b)]
		The real Hilbert space
		\[
		H:=\left\{f \in L^1_{\mathrm{loc},r} (S_0) \mid
		\int_{S_0} \frac 1 {\vert \phi ' \circ E \vert} f^2 < \infty\right\}
		\]
		is equipped with the scalar product
		\[
		\langle f,g\rangle_H :=
		\int_{S_0} \frac 1 {\vert \phi ' \circ E \vert} f\, g .
		\]
		\item[(c)]
		The domain of ${\mathcal T}$ is defined as
		\[
		D ({\mathcal T})
		:= \{ f \in H \mid {\mathcal T} f \ \mbox{exists weakly and}\
		{\mathcal T} f \in H \}.
		\]
	\end{itemize}  
\end{definition}

\noindent
We can now formulate our main results.

\begin{theorem}\label{skewadj}
	The transport operator ${\mathcal T} \colon H \supset D({\mathcal T}) \to H$ is skew-adjoint,
	i.e., ${\mathcal T}^\ast = -{\mathcal T}$, in both the Newtonian and the relativistic cases.
\end{theorem}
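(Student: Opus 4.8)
The plan is to realize $\mathcal{T}$ as the infinitesimal generator of the strongly continuous one-parameter unitary group on $H$ induced by the characteristic flow, and then to invoke Stone's theorem, according to which the generator of such a group is automatically skew-adjoint. For $t\in\mathbb{R}$ and $f\in H$ I would set $\mathcal{U}(t)f := f\circ (X,V)(t,\cdot)$. First I would check that each $\mathcal{U}(t)$ is unitary on $H$: since the flow $(X,V)(t,\cdot)$ is measure-preserving, preserves the particle energy $E$ and hence the weight $1/|\phi'\circ E|$, leaves $S_0$ invariant, and commutes with rotations so that spherical symmetry is preserved, the change of variables $z\mapsto (X,V)(t,z)$ gives $\langle \mathcal{U}(t)f,\mathcal{U}(t)g\rangle_H=\langle f,g\rangle_H$; together with the group law $\mathcal{U}(t)\mathcal{U}(s)=\mathcal{U}(t+s)$ inherited from the flow and $\mathcal{U}(t)^{-1}=\mathcal{U}(-t)$ this makes $\mathcal{U}(t)$ unitary. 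Strong continuity $\mathcal{U}(t)f\to f$ in $H$ as $t\to 0$ follows from dominated convergence for $f\in C^1_{c,r}(S_0)$ and then from the density of this space in $H$ combined with the uniform bound $\|\mathcal{U}(t)\|=1$.

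Let $B$ denote the generator of $(\mathcal{U}(t))_{t\in\mathbb{R}}$; by Stone's theorem $B^\ast=-B$, so it remains to identify $B$ with $\mathcal{T}$. For $\xi\in C^1_{c,r}(S_0)$ the difference quotient $(\mathcal{U}(t)\xi-\xi)/t$ converges, by \eqref{Tinfgen} and dominated convergence (the supports stay in a fixed compact subset of $S_0$ on which the weight is bounded), to the classical expression $\mathcal{T}\xi$ in $H$; hence $C^1_{c,r}(S_0)\subseteq D(B)$ and $B\xi=\mathcal{T}\xi$ there. The inclusion $B\subseteq\mathcal{T}$ in the weak sense of Definition~\ref{Tweakdef} is then immediate: for $f\in D(B)$ and $\xi\in C^\infty_{c,r}(S_0)$, unitarity gives $\langle Bf,\xi\rangle_H=\lim_{t\to 0}\langle f,(\mathcal{U}(-t)\xi-\xi)/t\rangle_H=-\langle f,\mathcal{T}\xi\rangle_H$, so $\mathcal{T}f$ exists weakly, equals $Bf\in H$, and $f\in D(\mathcal{T})$.

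For the reverse inclusion $\mathcal{T}\subseteq B$ the essential ingredient is that $C^1_{c,r}(S_0)$ is a core for $B$. Since this space is dense in $H$, contained in $D(B)$, and invariant under each $\mathcal{U}(t)$ (the flow is a rotation-equivariant $C^1$-diffeomorphism of $S_0$, so $\xi\circ(X,V)(t,\cdot)$ is again compactly supported, spherically symmetric and $C^1$), the standard core theorem for generators of $C_0$-semigroups applies. Now take $f\in D(\mathcal{T})$ and set $\mu:=\mathcal{T}f\in H$. The weak identity of Definition~\ref{Tweakdef}, which extends from $C^\infty_{c,r}(S_0)$ to $C^1_{c,r}(S_0)$ by approximation in the $C^1$-norm, reads $\langle B\xi,f\rangle_H=\langle\xi,-\mu\rangle_H$ for all $\xi\in C^1_{c,r}(S_0)$; the core property lets me pass this identity to all $\xi\in D(B)$, whence $f\in D(B^\ast)$ with $B^\ast f=-\mu$. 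As $B^\ast=-B$, this yields $f\in D(B)$ and $Bf=\mu=\mathcal{T}f$, i.e.\ $\mathcal{T}\subseteq B$. Combined with $B\subseteq\mathcal{T}$ we obtain $\mathcal{T}=B$, which is skew-adjoint. The relativistic case is covered by the same computation since, as noted in the Remark and in the introduction, $\mathcal{T}$ is defined with the factor $e^{-\lambda_0}$ removed precisely so that its characteristic flow \eqref{charsys} is measure-preserving in both cases.

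The step I expect to be the main obstacle is the identification of the domain, that is the reverse inclusion $\mathcal{T}\subseteq B$, and specifically the verification that $C^1_{c,r}(S_0)$ is a core for $B$. The delicate points are the invariance of this space under the flow together with the density of $C^1_{c,r}(S_0)$ in the weighted, spherically symmetric space $H$, and controlling the weight $1/|\phi'\circ E|$, which may degenerate as $E\to E_0$, near the boundary of $S_0$; restricting to functions supported in compact subsets of the open set $S_0$, where $|\phi'\circ E|$ is bounded below, is what makes the dominated-convergence arguments go through.
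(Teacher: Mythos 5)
Your proof is correct, but it takes a genuinely different route from the one the paper actually carries out. The paper proves Theorem~\ref{skewadj} directly: it first establishes the approximation result of Proposition~\ref{transapprox} (cut-off in $L$ and $E_0-E$, mollification in the adapted coordinates $(r,w,L)$, a commutator estimate to bound ${\mathcal T}(J_k\ast f^r)$ in $L^2$, weak compactness, and Mazur's lemma to upgrade to strong convergence), deduces skew-symmetry and hence $-{\mathcal T}\subset{\mathcal T}^\ast$, and obtains ${\mathcal T}^\ast\subset-{\mathcal T}$ immediately from Definition~\ref{Tweakdef} because $C^1_{c,r}(S_0)\subset D({\mathcal T})$. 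You instead realize ${\mathcal T}$ as the generator $B$ of the unitary $C_0$-group $U(t)f=f\circ(X,V)(t,\cdot)$, get skew-adjointness of $B$ for free from Stone's theorem, and reduce everything to the domain identification ${\mathcal T}=B$, which you settle by showing that $C^1_{c,r}(S_0)$ is a core for $B$ (dense, contained in $D(B)$, invariant under the flow) and then dualizing the weak identity of Definition~\ref{Tweakdef}. All the ingredients you invoke are available under (A1)--(A2): the flow \eqref{charsys} is a complete, measure-preserving, rotation-equivariant $C^1$ flow leaving $S_0$ and $E$ (hence the weight) invariant, the extension of the test-function identity from $C^\infty_{c,r}$ to $C^1_{c,r}$ is exactly the paper's Remark~(c), and the core theorem for invariant dense subspaces is standard. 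The paper is aware of this route --- its concluding remark sketches the group $U(t)$ and Stone's theorem --- but runs the logic in the opposite direction, using the already-proven skew-adjointness of ${\mathcal T}$ to identify ${\mathcal T}$ with the generator; you make Stone's theorem do the work, which is a legitimate and arguably cleaner alternative. What your route buys is the complete avoidance of the mollification machinery in $(r,w,L)$ coordinates and of Mazur's lemma; what the paper's route buys is Proposition~\ref{transapprox} itself, which is reused essentially in the proof of Theorem~\ref{jeans}, so the heavy approximation work is not wasted there. You correctly flag the two delicate points of your argument (flow-invariance of $C^1_{c,r}(S_0)$ and control of the weight $1/|\phi'\circ E|$ near $\partial S_0$), and both are handled exactly as you indicate, by restricting to compact subsets of $S_0$ on which the weight is bounded above and below.
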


\noindent
This theorem will be proven in the next section, but some immediate comments
are in order.

\begin{remark}
	\begin{itemize}
		\item[(a)]
		If ${\mathcal T} f$ exists weakly for some $f \in L^1_{\mathrm{loc},r} (S_0)$,
		then it is uniquely determined a.~e.\ on $S_0$.
		\item[(b)]
		If $f\in C^1_{c,r}(S_0)$, then the weak and the classical definition
		\eqref{Tclassdef} of ${\mathcal T} f$ coincide due to Proposition~\ref{transibp}
		and since ${\mathcal T}$ preserves spherical symmetry.
		In particular,
		$C^1_{c,r}(S_0)\subset D({\mathcal T})$ so that
		the operator ${\mathcal T}$  in Theorem~\ref{skewadj} is densely defined.
		\item[(c)]
		If ${\mathcal T} f$ exists weakly for some $f \in L^1_{\mathrm{loc},r} (S_0)$,
		then the formula in Definition~\ref{Tweakdef}~(a) holds for
		all test functions $\xi \in C^1_{c,r}(S_0)$.
		To see this we take such a function
		and define $\xi_k := J_k \ast \xi \in C^\infty_{c,r}(S_0)$,
		where $J \in C^\infty_{c,r}(B_1(0))$ is a mollifyer with $\int J =1$,
		and $J_k := k^6 J(k\cdot)$ for $k\in {\mathbb N}$.
		There exists a compact subset $K\subset S_0$ such that
		${\mathrm{supp}\,} \xi_k \subset K$ for $k$ sufficiently large,
		and $\xi_k \to \xi$, ${\mathcal T} \xi_k \to {\mathcal T} \xi$ as $k\to\infty$,
		uniformly on $K$. This implies the identity in
		Definition~\ref{Tweakdef}~(a) for $\xi$.
		\item[(d)]    
		In the relativistic case we restricted ourselves to spherically
		symmetric functions from the start, but also in the applications in the
		Newtonian case the skew-adjointness is needed on spherically symmetric
		functions.
	\end{itemize}
\end{remark}

\begin{theorem} \label{jeans}
	In the Newtonian case,
	\[
	\mathrm{ker}\, {\mathcal T} =
	\{ f\in H \mid \exists\, g \colon {\mathbb R}^2 \to {\mathbb R} :\
	f(x,v) = g(E(x,v),L(x,v))\ \mbox{a.~e.\ on}\ S_0\}.
	\]
\end{theorem}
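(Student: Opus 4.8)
The statement is an identity of two sets, so I would prove the two inclusions separately; the inclusion "$\supseteq$" is routine and the inclusion "$\subseteq$" is the real content. For "$\supseteq$" I would use that the characteristic flow $\Phi_t:=(X,V)(t,\cdot)$ preserves both $E$ and $L$, so any $f=g(E,L)$ is constant along characteristics; since $\Phi_t$ also preserves the measure $dx\,dv$ and the weight $1/|\phi'\circ E|$ (the latter being a function of $E$ alone), the change of variables already used in the proof of Proposition~\ref{transibp} gives $\int_{S_0}\frac{1}{|\phi'\circ E|}\,f\,\mathcal{T}\xi=0$ for every admissible test function, i.e.\ $\mathcal{T}f=0$ weakly, so such $f\in\ker\mathcal{T}$. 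For the serious inclusion "$\subseteq$" I would first upgrade weak stationarity to genuine flow invariance: if $\mathcal{T}f=0$ weakly, then $f\circ\Phi_{t}=f$ a.e.\ on $S_0$ for each $t\in\R$. To see this, set $I(t):=\int_{S_0}\frac{1}{|\phi'\circ E|}\,f\,(\xi\circ\Phi_t)$; because $\Phi_t$ commutes with the flow and with rotations, $\xi\circ\Phi_t$ is again a spherically symmetric $C^1_c$ test function with $\mathcal{T}(\xi\circ\Phi_t)=(\mathcal{T}\xi)\circ\Phi_t$, whence $I'(t)=0$ by the weak equation (using Remark~(c)), so $I(t)=I(0)$; undoing the measure-preserving change of variables in $I(t)$ then yields $\int\frac{1}{|\phi'\circ E|}(f\circ\Phi_{-t})\,\xi=\int\frac{1}{|\phi'\circ E|}f\,\xi$ for all $\xi$, i.e.\ $f\circ\Phi_{-t}=f$ a.e.

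The decisive idea for "$\subseteq$" is that spherical symmetry collapses the three–degree–of–freedom flow to a one–degree–of–freedom radial problem, for which the energy is a \emph{complete} invariant, so that no ergodic-theoretic input is needed. Off the null set $\{x\times v=0\}\cup\{x=0\}$ I would pass to the rotation invariants $r=|x|$, $w=\frac{x\cdot v}{r}$, $L$; every $f\in H$ is then of the form $f=F(r,w,L)$, the weight and $E=\frac12\bigl(w^2+L/r^2\bigr)+U_0(r)$ descend to these variables, and $\mathcal{T}$ becomes the radial transport operator $w\,\partial_r+\bigl(L/r^3-U_0'(r)\bigr)\partial_w$ with $L$ a parameter. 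Inserting the volume element $dx\,dv=\tfrac12\,dr\,dw\,dL\,d\omega\,d\alpha$ (polar coordinates for $x$ and for the component of $v$ perpendicular to $x$, where $L=r^2|v_\perp|^2$), the weak equation $\int_{S_0}\frac{1}{|\phi'\circ E|}f\,\mathcal{T}\xi=0$ tested against spherically symmetric $\xi$ turns into the corresponding reduced weak equation for $F$ in the variables $(r,w,L)$.

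For fixed $L>0$ the effective potential $\Psi_L(r)=U_0(r)+L/(2r^2)$ is, under (A1)--(A2), a single well: since $r^2U_0'(r)$ is the enclosed mass $M(r)$, the map $r\mapsto r^3U_0'(r)=r\,M(r)$ is strictly increasing, so $\Psi_L'(r)=\bigl(rM(r)-L\bigr)/r^3$ has a unique zero and every level set $\{\tfrac12w^2+\Psi_L(r)=E\}$ with $E\in(\min\Psi_L,E_0)$ is a single periodic orbit. On the full-measure open set of such regular orbits I would introduce flow-time coordinates $(E,\tau,L)$ in place of $(r,w,L)$; a short computation gives $dr\,dw=dE\,d\tau$ together with $\mathcal{T}=\partial_\tau$, while $1/|\phi'(E)|$ is $\tau$-independent. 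Hence the reduced weak equation reads $\int\frac{1}{|\phi'(E)|}\,F\,\partial_\tau\Xi=0$ for all reduced test functions $\Xi$, so the distributional $\tau$-derivative of $F/|\phi'(E)|$ vanishes; therefore $F$ is a.e.\ independent of $\tau$, which says precisely that $F$ depends only on $(E,L)$. Transporting back gives $f=g(E,L)$ a.e.\ on $S_0$, completing the inclusion.

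The step I expect to be the main obstacle is the rigorous passage to this reduced one–degree–of–freedom picture. Concretely, one must justify the change to flow-time (equivalently, action–angle) coordinates as a measure-preserving diffeomorphism that is smooth in $(E,L)$ up to and including the turning points, verify that the degenerate configurations, namely $L=0$, the circular orbits $\{w=0,\ r=r_{\min}(L)\}$, and the center $r=0$, form a null set that may be discarded when drawing an almost-everywhere conclusion, and check that the spherically symmetric $C^1_c$ test functions $\xi$ generate enough reduced test functions $\Xi(E,\tau,L)$ to force $\partial_\tau\bigl(F/|\phi'|\bigr)=0$ almost everywhere. Once these measure-theoretic and change-of-variables technicalities are settled, the conclusion is immediate, because in one degree of freedom each regular energy level is a single orbit and the energy therefore labels the orbits individually.
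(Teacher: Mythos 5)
Your two inclusions match the paper's, and the easy one ($\supseteq$) is proved exactly as there: invariance of $E$, $L$, and the phase-space measure under the characteristic flow gives $\int \frac{1}{|\phi'\circ E|}f\,\mathcal{T}\xi=0$. For the hard inclusion, however, your route is genuinely different. The paper never argues distributionally in flow-time coordinates on the non-smooth $f$ itself; instead it (i) cuts $f$ off away from $L=0$ and $E=E_0$, (ii) approximates by $F_k\in C^\infty_{c,r}(S_0)$ with $F_k\to f$ and $\mathcal{T}F_k\to 0$ in $H$ via Proposition~\ref{transapprox}, (iii) proves a quantitative Poincar\'e-type estimate $\|\xi-\mathcal{P}\xi\|_H^2\le C(m)\|\mathcal{T}\xi\|_H^2$ for smooth $\xi$, where $\mathcal{P}$ is the orthogonal projection onto $K_{\mathcal{T}}$ given by orbit averaging and the constant comes from the period bound $T(E,L)\le 2\pi M_0^2/(E^2\sqrt L)$ of Lemma~\ref{effpot}, and (iv) concludes $\mathcal{P}F_k\to f$, hence $f\in K_{\mathcal{T}}$ by closedness. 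The change of variables $(r,w)\mapsto(E,\tau)$ with unit Jacobian that you place at the centre of your argument is used in the paper too, but only ever applied to $C^1_c$ functions, where it is classical. What your approach buys is conceptual directness (no mollification, no Mazur-lemma machinery); what it costs is precisely the technical debt you list at the end and do not discharge: justifying the coordinate change as a measure-preserving isomorphism on the full-measure set of regular orbits, checking that the reduced test functions are dense, and proving that a vanishing distributional $\partial_\tau$-derivative forces $\tau$-independence a.e.\ on a domain that is not a product (the circumference $T(E,L)$ of each $\tau$-circle varies with $(E,L)$). None of these steps is false, but they constitute the bulk of the work; the paper's mollification strategy is designed exactly so that the delicate coordinate change only touches smooth functions. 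Two smaller remarks: your preliminary upgrade to flow invariance $f\circ\Phi_t=f$ a.e.\ is correct but is not actually used by the rest of your argument and can be dropped; and your monotonicity claim for $r\mapsto r\,m_0(r)$ requires $m_0(r)>0$ for $r>0$, whereas the paper works with $r\mapsto m_0(r)-L/r$, which is strictly increasing unconditionally.
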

\noindent
This theorem will be proven in Section~\ref{seckernel}.

\begin{remark}
	\begin{itemize}
		\item[(a)]
		Since ${\mathcal T} f=0$ means that $f$ satisfies the stationary Vlasov equation
		with the potential $U_0$,
		Theorem~\ref{jeans} generalizes the fact that for spherically symmetric
		steady states of the Vlasov-Poisson system the density $f$ on phase space
		depends only on the quantities $E$ and $L$, a fact known as Jeans' Theorem,
		cf.~\cite{BFH}.
		\item[(b)]
		In the relativistic case  Jeans' Theorem is false,
		cf.~\cite{Sch}. Since in the present paper we restrict
		ourselves to isotropic steady states, Theorem~\ref{jeans} might
		still be correct also in the
		relativistic case, but the proof given below does not work there.
	\end{itemize}
\end{remark}

\section{Proof of Theorem~\ref{skewadj}---Skew-adjointness of ${\mathcal T}$}
\label{secskew}

In view of Proposition~\ref{transibp},
the main tool for the proof of Theorem~\ref{skewadj} is to approximate
a function from $D ({\mathcal T})$ by smooth functions
in such a way that the images under ${\mathcal T}$ converge
as well. As a first step we exploit the fact that
${\mathcal T} E = {\mathcal T} L = 0$.

\begin{lemma} \phantomsection \label{transweakclos}     
	\begin{itemize}
		\item[(a)]
		Let $f \in D ({\mathcal T})$ and $\chi \in C^1([0,\infty[)$ be such that
		$\chi \circ L\, f, \chi \circ L\, {\mathcal T} f\in H$.
		Then $\chi \circ L \, f\in D ({\mathcal T})$ with
		${\mathcal T} \left(\chi \circ L\, f\right) = \chi \circ L\,{\mathcal T} f$
		weakly.
		\item[(b)]
		Let $f \in D ({\mathcal T})$ and $\chi \in C(]-\infty,E_0[)$
		be such that $\chi \circ E \, f, \chi \circ E \, {\mathcal T} f\in H$. 
		Then $\chi \circ E \, f\in D ({\mathcal T})$ with
		${\mathcal T} \left(\chi \circ E \, f\right) =  \chi \circ E\, {\mathcal T} f$
		weakly.
	\end{itemize} 
\end{lemma}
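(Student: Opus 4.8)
The plan is to exploit that ${\mathcal T}$ is a first-order derivation together with the conservation laws ${\mathcal T} E = {\mathcal T} L = 0$, and to feed a suitably modified test function into the weak formulation for $f$. For part~(a) I would argue directly. Fix a test function $\xi \in C^\infty_{c,r}(S_0)$ and set $K := \supp\xi$, a compact subset of $S_0$. Since $\chi \in C^1([0,\infty[)$ and $L \in C^1$ is spherically symmetric, the product $(\chi\circ L)\,\xi$ lies in $C^1_{c,r}(S_0)$ with support in $K$, hence is an admissible test function by part~(c) of the remark following Theorem~\ref{skewadj}. Using the Leibniz rule ${\mathcal T}(gh) = g\,{\mathcal T} h + h\,{\mathcal T} g$ for the first-order operator ${\mathcal T}$ and the chain rule ${\mathcal T}(\chi\circ L) = \chi'(L)\,{\mathcal T} L = 0$, we obtain ${\mathcal T}\bigl((\chi\circ L)\,\xi\bigr) = (\chi\circ L)\,{\mathcal T}\xi$ pointwise on $S_0$. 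Inserting $(\chi\circ L)\,\xi$ into the defining identity of Definition~\ref{Tweakdef}~(a) for $f\in D({\mathcal T})$ therefore yields
\[
\int_{S_0} \frac{1}{|\phi'\circ E|}\, f\,(\chi\circ L)\,{\mathcal T}\xi
= -\int_{S_0} \frac{1}{|\phi'\circ E|}\, {\mathcal T} f\,(\chi\circ L)\,\xi ,
\]
which is exactly the weak identity asserting ${\mathcal T}(\chi\circ L\, f) = \chi\circ L\,{\mathcal T} f$. Combined with the hypotheses $\chi\circ L\, f,\ \chi\circ L\,{\mathcal T} f\in H$, this gives $\chi\circ L\, f\in D({\mathcal T})$.

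For part~(b) the same computation would work verbatim if $\chi$ were $C^1$, since then ${\mathcal T}(\chi\circ E) = \chi'(E)\,{\mathcal T} E = 0$. The main obstacle is that here $\chi$ is only continuous, so $(\chi\circ E)\,\xi$ need not be $C^1$ and is not admissible as a test function. I would remove this obstruction by approximation. Fix $\xi$ and $K = \supp\xi$ as above; since $K\subset S_0$ is compact, $E(K)$ is contained in a compact interval $[a,b]\subset\,]-\infty,E_0[$. Choose $\chi_k\in C^1(]-\infty,E_0[)$ with $\chi_k\to\chi$ uniformly on $[a,b]$, for instance by mollification in the energy variable. Each $(\chi_k\circ E)\,\xi\in C^1_{c,r}(S_0)$ is admissible, and the computation of part~(a), now using ${\mathcal T} E = 0$, gives
\[
\int_{S_0} \frac{1}{|\phi'\circ E|}\, f\,(\chi_k\circ E)\,{\mathcal T}\xi
= -\int_{S_0} \frac{1}{|\phi'\circ E|}\, {\mathcal T} f\,(\chi_k\circ E)\,\xi
\]
for every $k$. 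On the compact set $K$ the weight $1/|\phi'\circ E|$ is bounded, $f$ and ${\mathcal T} f$ are integrable, and ${\mathcal T}\xi,\xi$ are bounded, while $\chi_k\circ E\to\chi\circ E$ uniformly on $K$; hence both sides converge as $k\to\infty$ to the corresponding integrals with $\chi\circ E$ in place of $\chi_k\circ E$. This is the weak identity for $\chi\circ E\, f$, and the hypotheses $\chi\circ E\, f,\ \chi\circ E\,{\mathcal T} f\in H$ complete the proof.

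I expect the only genuine difficulty to be the regularity gap in part~(b); everything else rests on the derivation property of ${\mathcal T}$, the conservation laws ${\mathcal T} E = {\mathcal T} L = 0$, and the extension of the admissible test-function class to $C^1_{c,r}(S_0)$ provided by the remark following Theorem~\ref{skewadj}. Care should be taken to perform the limit $k\to\infty$ for each fixed test function separately, so that all estimates stay confined to the fixed compact support $K$, on which the weight $1/|\phi'\circ E|$ is bounded and the convergence is dominated by the fixed integrable majorants $|f\,{\mathcal T}\xi|/|\phi'\circ E|$ and $|{\mathcal T} f\,\xi|/|\phi'\circ E|$.
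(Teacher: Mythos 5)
Your proposal is correct and follows essentially the same route as the paper: part (a) via ${\mathcal T} L=0$ and the product rule applied to the modified test function $(\chi\circ L)\,\xi$, and part (b) by mollifying $\chi$ in the energy variable and passing to the limit by uniform convergence on the compact support of the fixed test function (the paper merely sketches this mollification step, which you carry out in detail). No gaps.
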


\begin{proof}
	As to part (a), let $\xi \in C^1_{c,r} (S_0)$ be a test function.
	Since $\chi \circ L \in C^1_r(S_0)$, we know that
	$\chi \circ L\, \xi\in C^1_{c,r}(S_0)$ as well.
	Since $L$ is constant along characteristics, ${\mathcal T} L = 0$
	and therefore ${\mathcal T}(\chi \circ L\, \xi) = \chi \circ L\, {\mathcal T} \xi$
	classically. Thus, by Definition~\ref{Tweakdef},
	\[
	\int_{S_0} \frac 1{\vert \phi' \circ E \vert}
	f \, \chi \circ L \, {\mathcal T} \xi
	=
	\int_{S_0} \frac 1{\vert \phi' \circ E \vert}
	f \, {\mathcal T}( \chi \circ L \, \xi)
	=
	- \int_{S_0} \frac 1{\vert \phi' \circ E \vert} {\mathcal T} f
	\, \chi \circ L \,  \xi .
	\]
	The proof of part (b) is exactly the same,
	provided $\chi \in C^1(]-\infty,E_0[)$.
	If $\chi$ is only continuous we mollify $\chi$;
	we omit the details since below we need to apply the lemma with
	$\chi = |\phi'|$, and for all steady states of interest this function
	is in $C^1(]-\infty,E_0[)$.
\end{proof}

\begin{corollary} \label{transweakdefl2}
	Let $f \in  D ({\mathcal T})$. Then
	for any test function $\xi \in C^1_{c,r}(S_0)$,
	\[
	\int_{S_0} f \, {\mathcal T} \xi = - \int_{S_0} {\mathcal T} f \, \xi.
	\]
\end{corollary}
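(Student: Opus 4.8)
The plan is to deduce the unweighted identity from the weighted weak formulation in Definition~\ref{Tweakdef}(a) by absorbing the weight $1/|\phi'\circ E|$ into the test function. The mechanism behind this is that the weight is constant along the characteristic flow: since $E$ is preserved by the flow we have ${\mathcal T}E=0$, and hence ${\mathcal T}$ annihilates every $C^1$ function of $E$, in particular $|\phi'\circ E|$.

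Concretely, I would fix a test function $\xi \in C^1_{c,r}(S_0)$ and set $\eta := |\phi'\circ E|\,\xi$. The first step is to check that $\eta$ is again an admissible test function, $\eta \in C^1_{c,r}(S_0)$. For the steady states under consideration $|\phi'| = -\phi' \in C^1(]-\infty,E_0[)$, the fact already invoked in the proof of Lemma~\ref{transweakclos}, and since $E \in C^2$ is spherically symmetric we obtain $|\phi'\circ E| \in C^1_r(S_0)$; on the compact set $\mathrm{supp}\,\xi \subset S_0$ this function is bounded together with its first derivatives, so the product $\eta$ inherits compact support, spherical symmetry and $C^1$-regularity from $\xi$.

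The second step is to insert $\eta$ into the weak identity of Definition~\ref{Tweakdef}(a), which is permissible for $C^1_{c,r}$ test functions by part (c) of the remark following Theorem~\ref{skewadj}. Because ${\mathcal T}$ is a first-order differential operator, the product rule together with ${\mathcal T}(|\phi'\circ E|)=0$ yields ${\mathcal T}\eta = |\phi'\circ E|\,{\mathcal T}\xi$ classically on $S_0$. Substituting this into the left-hand side and the definition $\eta = |\phi'\circ E|\,\xi$ into the right-hand side, the two occurrences of $|\phi'\circ E|$ cancel the weight $1/|\phi'\circ E|$ on each side, and one is left with exactly
\[
\int_{S_0} f\,{\mathcal T}\xi = -\int_{S_0} {\mathcal T} f\,\xi .
\]
I expect the only delicate points---and hence the (admittedly modest) main obstacle---to be verifying the $C^1$-regularity and spherical symmetry of the weight $|\phi'\circ E|$ so that $\eta$ is a genuine test function, and justifying the product rule for ${\mathcal T}$ on the product $|\phi'\circ E|\,\xi$; the latter rests entirely on the single identity ${\mathcal T}E=0$, which is precisely what renders the weight invisible to ${\mathcal T}$.
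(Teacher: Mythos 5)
Your proposal is correct and is essentially the paper's argument: the paper proves the corollary by applying Lemma~\ref{transweakclos}(b) with $\chi=|\phi'|$, and the mechanism inside that lemma is exactly what you do directly, namely transferring the weight $|\phi'\circ E|$ onto the test function and using ${\mathcal T}E=0$ together with the product rule (plus the extension of the weak identity to $C^1_{c,r}$ test functions). Your restriction to the compact support of $\xi$ also matches the paper's remark that $|\phi'|$ may need to be cut off to ensure integrability.
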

\begin{proof}
	We apply  Lemma~\ref{transweakclos} to
	$\chi = \vert \phi ' \vert$, which if necessary has to be cut
	to ensure integrability; due to the compact support of test functions
	we may restrict ourselves to a compact subset of $S_0$.
\end{proof}

In order to approximate functions in $D({\mathcal T})$ in a suitable way we
have to mollify them, and for technical reasons this needs to be done
in coordinates which are adapted to spherical symmetry. For
$(x,v) \in {\mathbb R}^3\times {\mathbb R}^3$ we define
\[
r:=|x|,\ w:=\frac{x\cdot v}{r},\ L:= |x\times v|^2;
\]
the radial velocity $w$ is defined only if $x\neq 0$.
A function $f\in H$ is spherically
symmetric in the sense defined above iff
there exists a measurable function
$f^r \colon ]0, \infty [ \times {\mathbb R} \times ]0, \infty [ \to {\mathbb R}$
such that
\[
f(x,v)  = f^r (r,w,L)\ \mbox{for a.~e.}\ (x,v)\in {\mathbb R}^3 \times {\mathbb R}^3.
\]
In this case, $f^r$ is uniquely defined a.~e.\ on
$]0, \infty [ \times {\mathbb R} \times ]0, \infty [$.
In what follows an upper index $r$ will indicate that a spherically symmetric
object is expressed in the variables $r,w,L$. In particular,
\[
E^r(r,w,L) =
\left\{
\begin{array}{ll}
\displaystyle
\frac{1}{2} w^2 + \frac{L}{2 r^2} + U_0(r),&\ \mbox{Newtonian case},\\
\displaystyle
e^{\mu_0(r)}\sqrt{1+w^2  + \frac{L}{r^2}},&\ \mbox{relativistic case},
\end{array}
\right.
\]
\[
S_0^r = \{(r,w,L)\in ]0,\infty[\times {\mathbb R} \times ]0,\infty[ \mid
E^r(r,w,L) < E_0 \},
\]
and
\[
{\mathcal T}^r f = \partial_w E^r \partial_r f - \partial_r E^r \partial_w f,
\]
for $f\in C^1_c(S_0^r)$.
The transport operator ${\mathcal T}$ preserves spherical symmetry,
and for every $\xi \in C^1_{c,r} (S_0)$ with the property
that $\xi^r \in C^1_c (S_0^r)$,
\[
{\mathcal T}^r \xi^r = \left( {\mathcal T} \xi \right)^r\ \mbox{on}\ S_0^r ;
\] 
note that $\xi \in C^1_{c,r} (S_0)$ does not imply that
$\xi^r \in C^1_c (S_0^r)$, since the support of functions
in $C^1_c (S_0^r)$ has to be bounded away from  $r=0$ and $L = 0$.
The integration-by-parts formula in Proposition~\ref{transibp} can be
rewritten as
\begin{equation}
\begin{split}\label{intpartrwL}
\int_{S_0^r}
& \chi(E^r(r,w,L)) \,  f (r,w,L) \, ({\mathcal T}^r g )(r,w,L)
\,dr\,dw\,dL \\
& = -
\int_{S_0^r} \chi(E^r(r,w,L)) \,  ( {\mathcal T}^r f ) (r,w,L) \, g(r,w,L)\,dr\,dw\,dL
\end{split}
\end{equation}
for all $f,g \in C^1_c(S_0^r)$ and $\chi \in C (]-\infty , E_0[)$;
this follows from the spherical symmetry of the integrands and the
change-of-variables formula, observing that $dv = \frac{\pi}{r^2} dL\, dw$
and $dx = 4 \pi r^2 dr$.
Similarly, Corollary~\ref{transweakdefl2} can be rewritten in the
variables $r,w,L$, i.e., \eqref{intpartrwL} remains valid with
$\chi=1$, $g \in C^1_c(S_0^r)$, and $f$ replaced by the representative
$f^r$ of a function $f\in D({\mathcal T})$.

We can now prove the desired approximation result.

\begin{proposition} \label{transapprox}
	Let $f \in D ({\mathcal T})$. Then there exists a sequence
	$(F_k)_{k\in{\mathbb N}} \subset C^{\infty}_{c,r} (S_0)$ such that
	$F_k^r \in C^{\infty}_c (S_0^r)$ for $k\in{\mathbb N}$ and
	\[
	F_k \to f \ \mbox{and}\ {\mathcal T} F_k \to {\mathcal T} f\ \mbox{in}\  H  \ \mbox{as}\
	k\to\infty .
	\]
\end{proposition}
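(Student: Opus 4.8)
The plan is to perform the approximation in the adapted coordinates $(r,w,L)$, in which
$\mathcal{T}^r=\partial_w E^r\,\partial_r-\partial_r E^r\,\partial_w$ acts only in $(r,w)$ with $L$ a passive parameter, and in which $\mathcal{T}^r$ is divergence-free,
$\partial_r(\partial_w E^r)-\partial_w(\partial_r E^r)=0$; this is exactly the unweighted, coordinate form of the skew-symmetry \eqref{intpartrwL}. The construction proceeds in two stages, a cutting step followed by a mollification step, which are combined at the end by a diagonal argument. Throughout, recall from the rewriting of Corollary~\ref{transweakdefl2} that the representative $f^r$ of $f\in D(\mathcal{T})$ satisfies the weak transport identity with $(\mathcal{T} f)^r=\mathcal{T}^r f^r\in L^2$.

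First I would reduce to the case where $f^r$ has compact support in $S_0^r$, i.e.\ support bounded away from $r=0$, from $L=0$, and from $\{E^r=E_0\}$. Choose $n$-dependent cut-offs $\chi_1\in C^1([0,\infty[)$ vanishing near $L=0$ and $\chi_2\in C^1(]-\infty,E_0[)$ vanishing near $E_0$, both increasing to $1$, and set $f_n:=(\chi_2\circ E)(\chi_1\circ L)\,f$. Applying Lemma~\ref{transweakclos} twice (its hypotheses hold since the cut-offs are bounded and $f,\mathcal{T} f\in H$) gives $f_n\in D(\mathcal{T})$, and because $\mathcal{T} L=\mathcal{T} E=0$ one obtains the clean identity $\mathcal{T} f_n=(\chi_2\circ E)(\chi_1\circ L)\,\mathcal{T} f$. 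Dominated convergence, with dominating functions $|f|$ and $|\mathcal{T} f|\in H$, then yields $f_n\to f$ and $\mathcal{T} f_n\to \mathcal{T} f$ in $H$. Cutting in both $L$ and $E$ is what confines the support: on $\{L\ge\delta\}\cap\{E^r\le E_0-\delta'\}$ the formula for $E^r$ together with the monotonicity bound $U_0(r)\ge U_0(0)$ (respectively $\mu_0(r)\ge\mu_0(0)$) forces $r$ and $w$ into a compact range bounded away from $r=0$, so each $f_n^r$ is supported in a compact $K\subset S_0^r$ away from the coordinate singularities.

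It now suffices to approximate such a compactly supported $f_n$ by $C^\infty_c(S_0^r)$ functions. Let $J_k$ be a standard mollifier in the three variables $(r,w,L)$ and put $F^r:=J_k*f_n^r$; for $k$ large these lie in $C^\infty_c(S_0^r)$ with support in a fixed compact $K'\subset S_0^r$ still bounded away from $\{E^r=E_0\}$, and the associated spherically symmetric $F$ belong to $C^\infty_{c,r}(S_0)$, since smoothness of the map $(x,v)\mapsto(r,w,L)$ away from $r=0$ and the vanishing near $L=0$ make the composition smooth. Standard mollifier theory gives $J_k*f_n^r\to f_n^r$ in $L^2$. For the transport I invoke Friedrichs' commutator lemma: with $v:=\mathcal{T}^r f_n^r\in L^2$ the weak transport, the commutator satisfies
\[
[\mathcal{T}^r,J_k*]f_n^r(z)=\sum_i\int(\partial_i J_k)(z-y)\,[a_i(z)-a_i(y)]\,f_n^r(y)\,dy,
\]
where the $a_i$ are the smooth coefficients of $\mathcal{T}^r$. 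Estimating via $|a_i(z)-a_i(y)|\le C|z-y|$ on $K'$ and substituting $h=k(z-y)$, the leading term is a multiple of $(\operatorname{div}a)\,f_n^r=0$, so the commutator tends to $0$ in $L^2$; combined with $J_k*v\to v$ this gives $\mathcal{T}^r(J_k*f_n^r)\to \mathcal{T}^r f_n^r$ in $L^2$.

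Finally, on $K'$ the weight $1/|\phi'\circ E^r|$ is continuous and positive, hence bounded above and below, so in view of $dx\,dv=4\pi^2\,dr\,dw\,dL$ the unweighted $L^2$-convergence is, for functions supported in $K'$, equivalent to convergence in $H$. This upgrades both $F\to f_n$ and $\mathcal{T} F\to \mathcal{T} f_n$ to $H$, and a diagonal selection over $n$ and $k$ produces the required sequence. The main obstacle is precisely the second convergence $\mathcal{T} F_k\to \mathcal{T} f$: since $f^r$ possesses only the single weak combination $\mathcal{T}^r f^r$ and not the individual weak derivatives $\partial_r f^r$ and $\partial_w f^r$, one cannot naively commute $\mathcal{T}^r$ past the mollifier, and it is the vanishing of the Friedrichs commutator, driven exactly by the divergence-free property of $\mathcal{T}^r$, that makes the argument work.
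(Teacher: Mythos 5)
Your proposal is correct, and its overall skeleton coincides with the paper's: cut off in $L$ and $E$ via Lemma~\ref{transweakclos} to confine the support to a compact subset of $S_0^r$, pass to the $(r,w,L)$ coordinates, mollify there, and control $\T^r(J_k\ast f^r)$ by splitting it into a commutator term plus $J_k\ast(\T f)^r$, the latter identified through the weak formulation \eqref{intpartrwL}. Where you genuinely diverge is in the final convergence step. The paper uses the Lipschitz bound on $\partial_r E^r,\partial_w E^r$ only to show that the commutator term is \emph{bounded} in $L^2(S_0^r)$, then extracts a weakly convergent subsequence, identifies the weak limit as $(\T f)^r$ by testing, and finally upgrades to strong convergence of convex combinations via Mazur's lemma. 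You instead invoke the Friedrichs/DiPerna--Lions commutator lemma to conclude that the commutator tends to zero in $L^2$, which gives strong convergence of the mollifications themselves, with no subsequence, no weak-limit identification, and no Mazur step. This is a legitimate and arguably cleaner route: the coefficients of $\T^r$ are $C^1$ (hence Lipschitz) on a compact neighbourhood of the support and the field is divergence-free in $(r,w)$, so the lemma's hypotheses are met, and the weak transport identity in $(r,w,L)$ coordinates (the rewriting of Corollary~\ref{transweakdefl2}) is exactly the distributional statement the lemma requires. The trade-off is that you must use the full strength of the commutator lemma — the Lipschitz estimate you quote yields only the uniform $L^2$ bound, and the actual vanishing requires the additional Taylor-expansion-plus-density argument hidden in the phrase ``the leading term is a multiple of $(\mathrm{div}\, a)\,f_n^r$''; since you cite the lemma as a known result this is acceptable, but if written out in full it is comparable in length to the paper's weak-compactness argument. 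The remaining ingredients (compactness of the support after the double cut-off, smoothness of $F_k$ on $S_0$ thanks to the support being bounded away from $r=0$ and $L=0$, and the equivalence of $L^2$- and $H$-convergence on a fixed compact set where the weight is bounded above and below) match the paper and are handled correctly.
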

\begin{proof}
	We split the proof into several steps.\\
	\textit{Step1: Reduction to a compact support.}
	For each $k\in{\mathbb N}$ let $\chi_k \in C^{\infty} ({\mathbb R})$ be an
	increasing cut-off function such that 
	\[
	\chi_k (s) = 0 \ \mbox{for}\  s \leq \frac 1{2k},
	\qquad \chi_k (s) = 1\ \mbox{for}\  s \geq \frac 1k .
	\]
	For $(x,v) \in S_0$ and $k\in{\mathbb N}$ let
	\[
	f_k(x,v) := {\chi}_k ( L(x,v)) \, {\chi}_k (E_0 - E(x,v)) \, f(x,v).
	\]
	The boundedness of $ \chi_k$ together with the spherical symmetry of
	$E$ and $L$ ensure that $f_k \in H$. Hence by Lemma~\ref{transweakclos},
	$f_k \in D ({\mathcal T})$ with
	${\mathcal T} f_k = (\chi_k \circ L) \, ( \chi_k \circ (E_0 - E) ) \, ({\mathcal T} f)$,
	and by Lebesgue's dominated convergence theorem,
	\[
	f_k \to f\ \mbox{and}\ {\mathcal T} f_k \to {\mathcal T} f\ \mbox{in}\ H
	\ \mbox{as}\ k\to\infty.
	\]
	By applying the following arguments to $f_k$ for $k\in{\mathbb N}$ sufficiently large
	instead of to $f$, we may assume that $f^r$ has compact support in $S_0^r$
	and that there exists $m\in{\mathbb N}$ such that for a.~e.\
	$(r,w,L) \in S_0^r$ with $f^r(r,w,L) \neq 0$,
	\[
	E^r(r,w,L) < E_0 - \frac 1m < 0\ \mbox{and}\
	\bar B_{\frac 1m} (r,w,L) \subset S_0^r .
	\]
	This also implies that ${\mathcal T} f = 0$ a.~e.\ on
	$\{(x,v) \in S_0 \mid \vert x \vert \leq \frac 1m
	\lor L(x,v) \leq \frac 1m \lor E(x,v) \geq E_0 - \frac 1m \}$.
	Furthermore, $( {\mathcal T} f)^r$ has compact support in $S_0^r$.
	
	\noindent
	\textit{Step 2: The approximating sequence.}
	We first introduce some terminology. We let
	\[
	H^r
	:= \{ f \colon S_0^r \to {\mathbb R} \ \mbox{measurable} \mid
	\| f \|_{H^r} < \infty \} ,
	\]
	where the norm $\| \cdot \|_{H^r}$ is induced by the scalar product
	\[
	\langle f, g \rangle_{H^r} :=
	4 \pi^2 \int_{S_0^r} \frac{1}{\vert \phi' (E^r(r,w,L)) \vert}
	f(r,w,L) \, g(r,w,L)
	\,dr\,dw\,dL .
	\]
	The space $L^2 (S_0^r)$ is obtained by dropping the weight
	$1/|\phi'\circ E^r|$, but the factor $4 \pi^2$ is included
	in the corresponding scalar product and norm. Hence
	the change-of-variables formula shows that
	the map $f\mapsto f^r$ is an isometric isomorphism of
	the spaces $L^2_{r} (S_0) \cong L^2 (S_0^r)$ or $H\cong H^r$.
	
	Let $J \in C^{\infty}_{c} (B_1(0) )$ be a three-dimensional mollifyer, i.e.,
	$B_1(0)\subset {\mathbb R}^3$, $\int_{{\mathbb R}^3} J =1$, and $J \geq 0$. We define
	$J_k := k^3 J(k\cdot )$ for $k\in{\mathbb N}$.
	Due to the compact support of $f^r$ and $({\mathcal T} f)^r$ in $S_0^r$,
	standard mollifying arguments, a change of variables, and the fact that
	the weight $1/|\phi'\circ E|$ is a strictly positive,
	bounded, continuous function on any compact subset of $S_0^r$ it follows that
	\[
	f^r \in L^2 (S_0^r)\ \mbox{and}\
	J_k \ast f^r \to f^r\ \mbox{in}\ L^2(S_0^r)\ \mbox{and in}\
	H^r\ \mbox{as}\ k\to\infty,
	\]
	\[
	({\mathcal T} f)^r \in L^2 (S_0^r)\  \mbox{and}\
	J_k \ast ({\mathcal T} f)^r \to ({\mathcal T} f)^r\ \mbox{in}\ L^2(S_0^r)\ \mbox{and in}\
	H^r\ \mbox{as}\ k\to\infty .
	\]
	\textit{Step 3: Boundedness.} 
	We want to prove that
	$({\mathcal T} ( J_k \ast f^r) )_{k\in{\mathbb N}} \subset L^2(S_0^r)$ is bounded,
	from which we can then obtain the weak convergence of a subsequence.
	The properties of the support of $f^r$, the boundedness of $S_0^r$,
	and the mean value theorem yield the existence of a constant
	$C_E \geq 1$ such that 
	\[
	\vert E^r(z) - E^r(z') \vert \leq C_E \vert z- z' \vert\ \mbox{for}\
	z,z' \in \bar B_{\frac 1m} ({\mathrm{supp}\,} f^r) \subset S_0^r ;
	\]
	$B_\epsilon(M) := \cup_{x\in M} B_\epsilon (x)$ for $\epsilon >0$ and some set
	$M\subset {\mathbb R}^3$.
	We choose $k\in{\mathbb N}$ such that $k >2 C_E m$.
	For every $z=(r,w,L) \in S_0^r$ with $(J_k \ast f^r)(z) \neq 0$,
	it then follows that $\bar B_{1/(2m)} (z) \subset S_0^r$
	and $E^r(z) < E_0 -1/(2m)$.
	In particular, $J_k \ast f^r \in C^{\infty}_c (S_0^r)$ for $k >2 C_E m$.
	For these $k$ and $z \in {\mathrm{supp}\,}(J_k\ast f^r)$ 
	it follows that
	\[
	\begin{split}
	&
	\left[{\mathcal T}^r ( J_k \ast f^r)\right] (z)
	= \partial_w E^r(z) \, \left[ (\partial_r J_k) \ast f^r \right](z)
	- \partial_r E^r(z) \, \left[ (\partial_w J_k) \ast f^r \right] (z) \\
	&
	= \int_{S_0^r} \left[\partial_w E^r(z)  \, \partial_r J_k (z-z') -
	\partial_r E^r(z) \,\partial_w J_k (z-z') \right] f^r (z') \,d z'\\
	&
	= \int_{B_{1/k}(z)} \Bigl[(\partial_w E^r(z) -\partial_w E^r(z')) \,
	\partial_r J_k (z-z')\\
	&\qquad\qquad\qquad {}-
	(\partial_r E^r(z) -\partial_r E^r(z'))\, \partial_w J_k (z-z') \Bigr]
	f^r (z')\,d z' \\
	&\qquad
	{}+ \int_{S_0^r} \left[\partial_w E^r(z') \, \partial_r J_k (z-z')
	- \partial_r E^r(z') \, \partial_w J_k (z-z') \right]\, f^r(z') \,dz',
	\end{split}
	\]
	where $z' = (r', w', L')$. Since $\partial_r E^r$ and $\partial_w E^r$
	are Lipschitz on $\bar B_{1/m} ({\mathrm{supp}\,} f^r )$, the absolute value of
	the first term can be estimated by
	\[
	\begin{split}
	&
	\frac Ck \int_{B_{1/k}(z)} \vert D J_k (z-z') \vert \,
	\vert f^r (z') \vert \,dz' 
	= \frac Ck \int_{B_{1/k}(0)} \vert D J_k (\tilde z) \vert \,
	\vert f^r(z- \tilde z) \vert \,d \tilde z \\
	&
	= C k^3 \int_{B_{1/k}(0)} \vert D J (k \tilde z) \vert \,
	\vert f^r(z- \tilde z) \vert \,d \tilde z
	= C k^3 \left( \vert D J (k \cdot) \vert \ast
	\vert f^r \vert \right) (z),
	\end{split}
	\]
	where $C > 0$ depends on the support of $f$ and the fixed steady state $f_0$
	and $DJ_k$ denotes the total derivative of $J_k$.
	For the second term we note that
	$J_k (z - \cdot) \in C^1_c (S_0^r)$ for $k > 2 C_E m$,
	since ${\mathrm{supp}\,} J_k (z - \cdot) \subset \bar B_{1/(2m)} (z) \subset S_0^r$.
	By \eqref{intpartrwL} and the comment after that equation we conclude that
	\[
	\begin{split}
	\int_{S_0^r}
	&
	\left[ \partial_w E(z') \, \partial_r J_k (z-z') -
	\partial_r E(z') \, \partial_w J_k (z-z') \right]\, f^r(z') \,dz'\\
	&
	= - \langle {\mathcal T}^r \left[ J_k (z - \cdot )\right], f^r \rangle_2
	= \langle J_k (z - \cdot ), ({\mathcal T} f)^r \rangle_2
	= \left[ J_k \ast ({\mathcal T} f)^r \right] (z) .
	\end{split}
	\]
	Altogether, we get the estimate
	\[
	\begin{split}
	\| {\mathcal T} (J_k \ast f^r) \|_2
	&\leq
	\|J_k \ast ({\mathcal T} f)^r \|_2
	+ C k^3 \| \vert D J (k \cdot) \vert \ast \vert f^r \vert \|_2 \\
	&
	\leq \|J_k \ast ({\mathcal T} f)^r \|_2 + C k^3 \| f^r \|_2 \, \|D J (k \cdot) \|_1,
	\end{split}
	\]
	where we used Young's inequality.
	Since $J_k \ast ({\mathcal T} f)^r \to ({\mathcal T} f)^r$ in $L^2(S_0^r)$ as $k\to\infty$,
	the first term is bounded. As to the second, we note that
	$\| D J (k \cdot) \|_1 = k^{-3} \| D J \|_1$, and the 
	desired boundedness is proven.
	
	\noindent
	\textit{Step 4: Weak convergence.}
	Let $f_k^r := J_{k} \ast f^r$.
	Due to the previous step there exists a subsequence
	which by abuse of notation we again denote as $(f^r_k)_{k\in{\mathbb N}}$
	and a limit $g^r \in L^2(S_0^r)$ such that 
	\[
	{\mathcal T}^r f^r_k \rightharpoonup g^r\ \mbox{in}\ L^2(S_0^r)\ \mbox{as}\
	k\to\infty.
	\]
	We need to show that $g = {\mathcal T} f$, where
	for $(x,v) \in S_0$ with $x\times v \neq 0$,
	\[
	g(x,v) :=
	g^r \left(\vert x \vert, \frac{x\cdot v}{|x|} , | x \times v |^2\right).
	\] 
	Let $\xi \in C^1_{c,r} (S_0)$ be an arbitrary test function.
	We have to ensure that $\xi^r \in C^1_c (S_0^r)$,
	which can be achieved due to the compact support of $f^r$ in $S_0^r$.
	From the properties of the support of $f^r_k$ shown in Step~3,
	\[
	(\chi_{2m} \circ L) \, (\chi_{2m} \circ (E_0 - E) ) \, {\mathcal T}^r f^r_k
	= {\mathcal T}^r f^r_k
	\]
	if $k$ is sufficiently large.
	Let $\tilde \xi := (\chi_{2m} \circ L) \, (\chi_{2m} \circ (E_0 - E) ) \, \xi$
	and note that $\tilde \xi ^r \in C^1_c (S_0^r)$. In addition,
	$f \, {\mathcal T} \tilde \xi = f \, {\mathcal T}  \xi$ and $g\, \tilde \xi = g\, \xi$
	a.~e.\ on $S_0$, where the latter follows from the properties of the
	support of $f^r_k$ and the Du~Bois-Reymond theorem.
	Thus, changing variables yields
	\[
	\begin{split}
	\langle g , \xi \rangle_H
	&
	= \langle g ,\tilde \xi \rangle_H
	= \langle g^r ,\tilde \xi^r \rangle_{H^r}
	= \lim_{k\to \infty} \langle {\mathcal T}^r f^r_k, \tilde \xi^r \rangle_{H^r}
	= - \lim_{k\to \infty} \langle f^r_k, {\mathcal T}^r \tilde \xi^r \rangle_{H^r}\\
	&
	= - \langle f^r, ({\mathcal T} \tilde \xi)^r \rangle_{H^r} 
	= - \langle f, {\mathcal T} \tilde \xi \rangle_H 
	= - \langle f, {\mathcal T} \xi \rangle_H ,
	\end{split}
	\]
	where we used \eqref{intpartrwL}
	and the fact that due to the compact support of $\tilde \xi^r$,
	$\langle \cdot , \tilde \xi^r \rangle_{H^r}\in L^2 (S_0^r)'$.
	
	\noindent
	\textit{Step 5: Strong convergence.} 
	By the previous step, ${\mathcal T}^r f_k^r \rightharpoonup ({\mathcal T} f)^r$ in
	$L^2 (S_0^r)$ as $k\to\infty$. Mazur's lemma implies that for
	every $k\in{\mathbb N}$ there exists $N_k \geq k$ and weights
	$c_k^k , \ldots , c_{N_k}^k \in [0,1]$ with $\sum_{j=k}^{N_k} c_j^k = 1$
	such that 
	\[
	{\mathcal T}^r \left( \sum_{j=k}^{N_k} c_j^k f_j^r \right)
	= \sum_{j=k}^{N_k} c_j^k {\mathcal T}^r f_j^r \to ({\mathcal T} f)^r \ \mbox{in}\
	L^2 (S_0^r) \ \mbox{as}\ k\to\infty .
	\]
	Let $F_k^r := \sum_{j=k}^{N_k} c_j^k f_j^r$ for $k\in{\mathbb N}$. Then
	$F_k^r \to f^r$ in $L^2(S_0^r)$ and $H^r$ as $k\to\infty$.
	Also $F_k^r \in C^{\infty}_{c} (S_0^r)$ for $k$ sufficiently large.
	Finally, 
	\[
	F_k (x,v) :=
	F_k^r \left(| x |, \frac{x \cdot v}{| x |}, | x \times v |^2\right)\
	\mbox{for}\
	(x,v) \in S_0\ \mbox{with}\ x \times v \neq 0,
	\]
	extended by $0$ on $S_0$, defines a function
	$F_k \in C^{\infty}_{c,r} (S_0)$ for $k$ sufficiently large,
	due to the compact support of $F_k^r$. Changing variables yields 
	\[
	F_k \to f\ \mbox{in}\ H \ \mbox{and}\ {\mathcal T} F_k \to {\mathcal T} f\ \mbox{in}\
	L^2 (S_0)\ \mbox{as}\ k\to\infty,
	\]
	and the support properties of the involved functions
	allow us to conclude that ${\mathcal T} F_k \to {\mathcal T} f$ in $H$ as well,
	which finishes the proof. 
\end{proof}

Let us recall the definition of the adjoint operator
${\mathcal T}^* \colon H \supset D ({\mathcal T}^*) \to H$ of the operator ${\mathcal T}$.
Its domain of definition is
\[
D ({\mathcal T}^*) := 
\{ f \in H \mid \exists_1 h \in H\, \forall g \in D ({\mathcal T}) \colon
\langle {\mathcal T} g , f \rangle_H = \langle  g , h \rangle_H \}.
\]
For $f \in D ({\mathcal T}^*)$, ${\mathcal T}^* f := h$; note that ${\mathcal T}$ is a densely
defined operator on the Hilbert space $H$.

\noindent
\begin{proof}[Proof of Theorem~\ref{skewadj}]
	First we observe that ${\mathcal T}$ is skew-symmetric, i.e.,
	for all $f,g \in D({\mathcal T})$ it holds that 
	\begin{equation} \label{transskewsymm}
	\langle f , {\mathcal T} g \rangle_H = - \langle {\mathcal T} f , g \rangle_H.
	\end{equation}
	This follows by approximating one of the two functions
	via Proposition~\ref{transapprox} and then using
	Definition~\ref{Tweakdef}.
	
	Now let $f\in D ({\mathcal T})$. Then \eqref{transskewsymm}
	implies that
	\[
	\langle {\mathcal T} g, f \rangle_H = - \langle g, {\mathcal T} f  \rangle_H
	= \langle g, - {\mathcal T} f \rangle_H
	\]
	for all $g \in D ({\mathcal T})$, i.e.,
	$f \in D ({\mathcal T}^*)$ with ${\mathcal T}^* f = - {\mathcal T} f$,
	and hence $-{\mathcal T} \subset {\mathcal T}^\ast$.
	
	If on the other hand  $f\in D ({\mathcal T}^*)$ and $h \in H$ are such that
	$\langle {\mathcal T} g , f \rangle_H = \langle  g , h \rangle_H$ for all
	$g\in D ({\mathcal T})$, then since $C^1_{c,r} (S_0) \subset D({\mathcal T})$ this implies that 
	\[
	\langle f , {\mathcal T} \xi \rangle_H = - \langle h , \xi \rangle_H
	\]
	for all test functions $\xi \in C^1_{c,r} (S_0)$.
	By Definition~\ref{Tweakdef}, this means that
	$f\in D({\mathcal T})$ with ${\mathcal T} f = - h$, i.e., ${\mathcal T}^\ast \subset -{\mathcal T}$,
	and the proof is complete. \end{proof}

We conclude this section with some further remarks.

\begin{remark}
	\begin{itemize}
		\item[(a)]
		As noted in the introduction the relevant operator in the general
		relativistic case is actually given as $\widetilde{{\mathcal T}} = e^{-\lambda_0} {\mathcal T}$.
		Let us denote by $\widetilde{H}$ the Hilbert space $H$ equipped with
		the scalar product
		\[
		\langle f,g\rangle_{\widetilde{H}} :=
		\int_{S_0} \frac {e^{\lambda_0}}{\vert \phi ' \circ E \vert} f\, g .
		\]
		Then the transport operator $\widetilde{{\mathcal T}} \colon
		\widetilde{H} \supset D({\mathcal T}) \to \widetilde{H}$ is skew-adjoint;
		note that $D(\widetilde{{\mathcal T}})=D({\mathcal T})$.
		\item[(b)]
		Theorem~\ref{skewadj} remains correct without the assumption
		of spherical symmetry, i.e., if this symmetry assumption is
		removed from all the relevant function spaces. Indeed, the proof
		then becomes simpler, but as mentioned before, in the applications
		the operator ${\mathcal T}$ is defined on spherically symmetric functions,
		and Theorem~\ref{skewadj} including this assumption is needed
		in the proof of Theorem~\ref{jeans}.
		\item[(c)]
		The relation \eqref{Tinfgen} to the characteristic flow induced by
		\eqref{charsys} suggests an alternative route to study the operator
		${\mathcal T}$. Since the characteristic flow is measure preserving,
		\[
		(U (s) f) (x,v) := f( X(s,x,v), V(s,x,v)) , \quad (x,v) \in S_0,\
		s\in {\mathbb R},
		\]
		defines a unitary $C^0$-group on $H$.
		By Stone's theorem, this $C^0$-group has a unique skew-adjoint
		infinitesimal generator $\tilde {\mathcal T}$ defined on the dense subset
		\[
		D (\tilde {\mathcal T}) := \left\{ f \in H \mid \lim_{s\to 0}
		\frac{U (s) f - f}{s}\ \mbox{exists in}\ H \right\}
		\]
		by 
		\[
		\tilde {\mathcal T} f := \lim_{s\to 0} \frac{U (s) f - f}{s}.
		\]
		Since ${\mathcal T}$ is skew-adjoint on $H$ as well and
		${\mathcal T}$ and $\tilde {\mathcal T}$ coincide on the dense subset $C^1_{c,r} (\Omega_0)$,
		${\mathcal T} = \tilde {\mathcal T}$, in particular $D ({\mathcal T}) = D (\tilde {\mathcal T})$.
	\end{itemize}
\end{remark}

\section{Proof of Theorem~\ref{jeans}---The kernel of ${\mathcal T}$}
\label{seckernel}

We begin this section by stressing that from now on
all the arguments refer only to the Newtonian case, i.e.,
to the Vlasov-Poisson system. In that context
a smooth, spherically symmetric
solution $f$ of the equation ${\mathcal T} f=0$ depends
only on the quantities $E$ and $L$, cf.~\cite{BFH}.
Functions in $D({\mathcal T})$ need not be smooth, and it therefore seems natural
to prove Theorem~\ref{jeans} by mollifying such functions like in the
previous section. 
The mollification of a function in the kernel of ${\mathcal T}$ need
not belong to the kernel anymore,
but we will show that the distance between elements of an
approximating sequence obtained by mollification
and their projection onto the space of functions depending only on $(E,L)$
tends to zero.
In order to define this projection, we first analyze the solutions
of the characteristic system \eqref{charsys}
in the coordinates $(r,w,L)$.
Since $L$ is constant along characteristics, we treat
it as a parameter. For fixed $L>0$ the particle trajectories
are governed by
\begin{equation} \label{csrwL}
\dot{r} = w , \; \dot{w} = - \psi_L'(r),
\end{equation}
where the effective potential $\psi_L$ is defined as
\[
\psi_L(r) := U_0(r) + \frac{L}{2 r^2} .
\]
We need the following properties of this effective potential;
these results can be found in \cite{GuRe2007,LeMeRa1}, but  for the sake of
completeness, we include their proofs.

\begin{lemma} \phantomsection \label{effpot}
	\begin{itemize}
		\item[(a)]
		For any $L>0$ there exists a unique $r_L > 0$ such that 
		\[
		\min_{\left]0, \infty \right[}( \psi_L ) = \psi_L(r_L) < 0.
		\]
		The mapping
		$\left]0, \infty \right[ \ni L \mapsto r_L$
		is continuously differentiable.
		\item[(b)]
		For any $L>0$ and $E \in \left] \psi_L (r_L) , 0 \right[$
		there exist two unique radii
		\[
		0 < r_-(E,L) < r_L < r_+(E,L) < \infty
		\]
		such that $\psi_L (r_\pm(E,L)) = E $. The functions
		\[
		\{ (E,L) \in \left]- \infty, 0 \right[ \times \left]0, \infty \right[
		\mid \psi_L(r_L) < E \} \ni (E,L) \mapsto r_\pm(E,L)
		\]
		are continuously differentiable. 
		\item[(c)]
		For any $L>0$ and $E \in \left] \psi_L (r_L) , 0 \right[$,
		\[
		r_+(E,L) < - \frac{M_0}{E},
		\]
		where $M_0:=||f_0||_1\in \,]0,\infty[$
		denotes the total mass of the steady state.
		\item[(d)]
		For any $L>0$,
		$E \in \left] \psi_L (r_L) , 0 \right[$ and
		$r \in \left[ r_-(E,L) , r_+(E,L) \right]$
		the following estimate holds:
		\[
		E - \psi_L(r) \geq L \, \frac{(r_+(E,L) - r) \,
			(r - r_-(E,L))}{2 r^2 r_-(E,L)  r_+(E,L)}.
		\]
	\end{itemize}
\end{lemma}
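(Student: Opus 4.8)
The plan is to treat parts (a)--(d) in order, with everything resting on the radial mass function $m(r):=r^2U_0'(r)$. Writing the Poisson equation radially as $\bigl(r^2U_0'\bigr)'=4\pi r^2\rho_0$ and using $\rho_0=\int f_0\,dv\ge 0$ shows $m'(r)=4\pi r^2\rho_0(r)\ge 0$, so $m$ is nondecreasing with $m(0)=0$ and $m(r)\to M_0$ as $r\to\infty$; since $U_0\in C^2$ the density $\rho_0$ is continuous, so $m\in C^1$. Integrating $U_0'(s)=m(s)/s^2$ from $r$ to $\infty$ against the boundary condition $U_0(\infty)=0$ and the bound $m\le M_0$ gives the pointwise estimate $-M_0/r\le U_0(r)<0$. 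These facts, all consequences of (A1)--(A2), are the only input beyond the definition of $\psi_L$.

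For part (a) I would write $\psi_L'(r)=U_0'(r)-L/r^3=\bigl(r\,m(r)-L\bigr)/r^3$ and study $q(r):=r\,m(r)$. It is continuous and nondecreasing with $q(0^+)=0$, $q(\infty)=\infty$, and $q'(r)=m(r)+4\pi r^3\rho_0(r)>0$ wherever $m>0$; hence for $L>0$ the equation $q(r)=L$ has a unique solution $r_L$, at which $\psi_L'$ changes sign from $-$ to $+$, so $r_L$ is the unique global minimizer. To see $\psi_L(r_L)<0$, I would combine $L=r_Lm(r_L)$ with $U_0(r_L)=-\int_{r_L}^\infty m(s)s^{-2}\,ds\le -m(r_L)/r_L$ to get $\psi_L(r_L)=U_0(r_L)+m(r_L)/(2r_L)\le -m(r_L)/(2r_L)<0$, using $m(r_L)>0$. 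The $C^1$ dependence of $r_L$ on $L$ follows from the implicit function theorem applied to $q(r)-L=0$, since $q'(r_L)\ge m(r_L)>0$.

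Parts (b) and (c) are then routine. On $]0,r_L]$ the function $\psi_L$ decreases strictly from $+\infty$ to $\psi_L(r_L)<E$, and on $[r_L,\infty[$ it increases strictly from $\psi_L(r_L)<E$ to $\psi_L(\infty)=0>E$; the intermediate value theorem together with strict monotonicity produces the unique radii $r_-<r_L<r_+$ solving $\psi_L(r)=E$, and since $\psi_L'(r_\pm)\neq 0$ the implicit function theorem applied to $\psi_L(r)-E=0$ yields their $C^1$ dependence on $(E,L)$. For (c) I would note that $U_0(r_+)=E-L/(2r_+^2)<E$ while $U_0(r_+)\ge -M_0/r_+$, so $-M_0/r_+<E$, and since $E<0$ this rearranges to $r_+<-M_0/E$.

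The one genuinely non-obvious step is (d), where the plan is to change variables to $u:=1/r$. Setting $V(u):=U_0(1/u)$, the chain rule gives $V'(u)=-m(1/u)$, which is nondecreasing in $u$ because $m$ is nondecreasing while $u\mapsto 1/u$ is decreasing; hence $V$ is convex. Writing $\Psi(u):=\psi_L(1/u)=V(u)+\tfrac{L}{2}u^2$ and $a:=1/r_+<b:=1/r_-$, the endpoints satisfy $\Psi(a)=\Psi(b)=E$. Let $\Phi(u):=E+\tfrac{L}{2}(u-a)(u-b)$ be the quadratic with leading coefficient $L/2$ taking the value $E$ at both $a$ and $b$; then $\Phi(u)-\tfrac{L}{2}u^2$ is exactly the affine chord of $V$ between $a$ and $b$, so convexity of $V$ gives $V\le\Phi-\tfrac{L}{2}u^2$, i.e.\ $\Psi\le\Phi$, on $[a,b]$. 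Consequently $E-\Psi(u)\ge E-\Phi(u)=-\tfrac{L}{2}(u-a)(u-b)=\tfrac{L}{2}(u-a)(b-u)$, and substituting $u=1/r$, $a=1/r_+$, $b=1/r_-$ turns $(u-a)(b-u)$ into $(r_+-r)(r-r_-)/(r^2r_-r_+)$, which is precisely the asserted bound. The main obstacle throughout is recognizing this convexity in the variable $1/r$ (equivalently, the monotonicity of the mass function); once it is identified, the estimate in (d) follows with little further work.
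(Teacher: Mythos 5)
Your proposal is correct. Parts (a)--(c) follow essentially the same path as the paper: the factorization $\psi_L'(r)=r^{-3}(r\,m(r)-L)$ with the monotone mass function, the implicit function theorem for the $C^1$ dependence, and the bound $U_0(r)\geq -M_0/r$ for (c) (your direct rearrangement of $-M_0/r_+\leq U_0(r_+)<E$ is in fact a little cleaner than the paper's route through the quadratic inequality $E+M_0/r-L/(2r^2)>0$, and your quantitative bound $\psi_L(r_L)\leq -m(r_L)/(2r_L)$ replaces the paper's soft monotonicity-plus-limits argument). The genuine difference is in (d): the paper introduces the auxiliary function $\xi(r):=E-\psi_L(r)-L\frac{(r_+-r)(r-r_-)}{2r^2r_-r_+}$ and observes that $r\mapsto r\xi(r)$ is concave because $\frac{d^2}{dr^2}[rU_0(r)]=4\pi r\rho_0\geq 0$, whereas you substitute $u=1/r$, show $V(u)=U_0(1/u)$ is convex from $V'(u)=-m(1/u)$ being nondecreasing, and compare $\Psi=V+\frac{L}{2}u^2$ with the quadratic through the endpoints. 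Since $\frac{d^2}{du^2}\left[U_0(1/u)\right]=u^{-3}\frac{d^2}{dr^2}\left[rU_0(r)\right]\big|_{r=1/u}$, the two convexity statements are equivalent, so the underlying mechanism is the same; but your version makes the structure transparent (a convex function lies below its chord, and the chord is exactly what produces the quadratic $\frac{L}{2}(u-a)(b-u)$), at the cost of the change of variables, while the paper's version is shorter once one accepts the somewhat unmotivated definition of $\xi$. Both arguments ultimately rest only on $\rho_0\geq 0$ and the radial Poisson equation, and I see no gap in yours.
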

\begin{proof}
	First we note the
	$\psi_L'(r) = U_0'(r) - L/r^3 = r^{-2} (m_0(r) - L/r)$,
	where $m_0(r):=4\pi \int_0^r s^2 \rho_0(s)\, ds$ is the mass within the ball
	of radius $r$ for the given steady state.
	Hence $\psi_L'(r) = 0$ is equivalent to $m_0(r) = L/r$.
	Since the mapping
	$\left] 0, \infty \right[ \ni r \mapsto m_0(r) - L/r$
	is strictly increasing and 
	\[
	\lim_{r\to 0} \left(  m_0(r) - \frac{L}{r} \right)
	= - \infty ,\; \lim_{r\to\infty} \left(  m_0(r) - \frac{L}{r} \right) > 0,
	\]
	there exists a unique radius $r_L > 0$ with $\psi_L'(r_L) = 0$
	as well as $\psi_L'(r) < 0$ for $0 < r < r_L$ and
	$\psi_L'(r) > 0$ for $r > r_L$.
	This monotonicity together with $\lim_{r\to 0} \psi_L(r) = \infty$ and
	$\lim_{r\to \infty} \psi_L(r) = \lim_{r\to \infty} U_0(r) = 0$ implies that
	$\psi_L (r_L)$ is negative and the minimal value of
	$\psi_L$ on $\left] 0, \infty \right[$.
	Since
	\[
	\frac{d}{d r} \left( m_0(r) - \frac{L}{r} \right)
	= 4 \pi r^2 \rho_0(r) + \frac L{r^2} > 0
	\]
	for all $r > 0$, the continuous differentiability follows
	by the implicit function theorem. This proves part (a),
	and we note that $r_L = L/m_0(r_L)$.
	
	As to part (b), the monotonicity of $\psi_L$
	together with its limit behavior as $r\to 0$ and $r\to\infty$
	yields the existence and uniqueness of $r_\pm(E,L)$. Since
	$\psi_L'(r) \neq 0$ for $r \neq r_L$, the implicit function theorem
	implies that the mapping $(E,L) \mapsto r_\pm (E,L)$ is
	continuously differentiable.
	
	In order to show (c) we first note that for all $r>0$,
	\[
	\begin{split}
	U_0(r) &= - \frac{m_0(r)}{r} - 4 \pi \int_r^\infty s \rho_0(s) ds \\
	&\geq - \frac{1}{r} \left( m_0(r) +
	4 \pi \int_r^\infty s^2 \rho_0(s)\ ds \right) = -\frac{M_0}{r} .
	\end{split}
	\]
	Hence every $r>0$ with $E - \psi_L (r) > 0$ also satisfies
	\[
	E + \frac {M_0} r - \frac L{2 r^2} > 0.
	\]
	This quadratic inequality implies that
	\[
	r_+(E,L) \leq \frac{L}{M_0 - \sqrt{M_0^2 + 2 EL}}
	= \frac{-M_0 - \sqrt{M_0^2 + 2EL}}{2E} < - \frac{M_0}{E};
	\]
	note that for $0>E>\psi_L(r_L)$,
	\[
	M_0^2 + 2 E L
	> M_0^2 - 2 L \frac{M_0}{r_L} + \frac{L^2}{r_L^2}
	= (M_0 - m_0(r_L))^2 \geq 0.
	\]
	As to part (d) we define for
	$r \in \left[ r_-(E,L) , r_+(E,L) \right]$,
	\[
	\xi (r) := E - \psi_L(r) - L \, \frac{(r_+(E,L) - r) \,
		(r - r_-(E,L))}{2 r^2 r_-(E,L)  r_+(E,L)}.
	\]
	Then the radial Poisson equation implies that
	\[
	\frac{d^2}{dr^2} \left[ r \xi(r) \right]
	= - \frac{1}{r} \frac{d}{dr} \left[ r^2 U_0' (r) \right]
	= - 4 \pi r \rho_0(r) \leq 0.
	\]
	The mapping
	$\left[ r_-(E,L) , r_+(E,L) \right] \ni r \mapsto r \xi(r) \in {\mathbb R}$
	is therefore concave with $\xi(r_\pm(E,L)) = 0$.
	Hence $\xi\geq 0$ on the interval
	$\left[ r_-(E,L) , r_+(E,L) \right]$, and proof is complete. 
\end{proof}
We now consider an arbitrary, global solution
${\mathbb R} \ni t \mapsto (r(t), w(t), L)$ of the characteristic system \eqref{csrwL}.
Since the particle energy is conserved along characteristics,
$E = E(r(t), w(t), L)$ for all $t\in{\mathbb R}$. We assume that
$L >0$ and $E < 0$, since other solutions are of no interest.
For any $t\in{\mathbb R}$,
\[
\psi_L (r_L) \leq \psi_L (r(t)) \leq \frac 12 w^2 (t) + \psi_L (r(t)) = E
\]  
and thus $r_-(E,L) \leq r(t) \leq r_+ (E,L) $ by Lemma~\ref{effpot}.
Solving for $w$ yields
\[
\dot r (t) = w(t)  = \pm \sqrt{ 2 E - 2 \psi_L (r(t)) }
\]
for $t\in{\mathbb R}$.  Therefore,
$r$ oscillates between $r_-(E,L)$ and $r_+(E,L)$,
where $\dot{r} = 0$ is equivalent to $r=r_\pm (E,L)$,
and $\dot{r}$ switches its sign when reaching $r_\pm(E,L)$.
By applying the inverse function theorem and integrating,
we obtain the following formula for the period of the $r$-motion, i.e.,
the time needed for $r$ to travel from $r_-(E,L)$ to $r_+(E,L)$ and back.
For $L > 0$ and $\psi_L (r_L) < E < E_0$ let
\begin{equation} \label{Tdef}
T(E,L) := 2 \int_{r_-(E,L)}^{r_+(E,L)} \frac{\,d r}{\sqrt{2E - 2\psi_L(r)}}.
\end{equation}
Since $E - \psi_L(r) > 0$ for $r_-(E,L) < r < r_+(E,L)$, this
expression is well defined.
Lemma~\ref{effpot} implies that $T(E,L)$ is finite:
\begin{equation}
\begin{split} \label{Test}
T(E,L)
& =
\sqrt{2} \int_{r_-(E,L)}^{r_+(E,L)} \frac{dr}{\sqrt{E - \psi_L(r)}}\\
& \leq
2 \int_{r_-(E,L)}^{r_+(E,L)}
\frac{r \sqrt{r_-(E,L) r_+(E,L)}}{\sqrt{L} \sqrt{(r_+(E,L)-r) \,
		(r - r_-(E,L))}} \,dr \\
& \leq
2 \frac{r_+^2(E,L)}{\sqrt{L}}  \int_0^1 \frac{ds}{\sqrt{s(1-s)}}
= 2 \pi \frac{r_+^2(E,L)}{\sqrt{L}} \leq 2 \pi \frac{M_0^2}{E^2 \sqrt{L}}.
\end{split}
\end{equation}
The projection onto the space of functions depending only on $E$ and $L$
is obtained by averaging over trajectories fixed by $(E,L)$.
For fixed $(r,w,L) \in ]0,\infty[ \times {\mathbb R} \times ]0,\infty[$ let
${\mathbb R} \ni t \mapsto (R,W)(t,r,w,L)$ be the unique global solution of
the characteristic system \eqref{csrwL}
satisfying the initial condition $(R,W)(0,r,w,L) = (r,w)$.
For $f \in H$ (extended by $0$ to ${\mathbb R}^3 \times {\mathbb R}^3$)
and $L >0$, $\psi_L (r_L) < E < E_0$ we define its projection as
\begin{equation}
\begin{split}\label{Pfdef}
{\mathcal P} & f (E,L)
:= \int_0^1 f^r ( (R,W)(t \, T(E,L), r_-(E,L), 0, L) , L ) \,dt \\
&= \frac 1{T(E,L)} \int_{r_-(E,L)}^{r_+(E,L)}
\frac{ f^r(r , \sqrt{2E - 2\psi_L (r)}, L)
	+  f^r(r , - \sqrt{\ldots}, L) }{\sqrt{2E - 2\psi_L (r)}} \,dr .
\end{split}
\end{equation}
Then ${\mathcal P} f (E,L)$ is uniquely determined for a.~e.\
$(E,L) \in {\mathbb R}^2$ satisfying $L>0$ and $\psi_L (r_L) < E < E_0$,
since 
\[
\int_{S_0}  f(x,v) \,dx\,dv
= 4 \pi^2 \int_0^{\infty} \int_{\psi_L(r_L)}^{E_0} T(E,L)\, {\mathcal P} f (E,L)\,dE \,dL.
\]
To obtain this identity we first change to the integration variables $(r,w,L)$
as explained in the derivation of \eqref{intpartrwL}, then change the order of
integration
from $(r,w,L)$ to $(L,w,r)$,
change the integration variable $w$ to $E$, and finally we express the resulting
integrand using \eqref{Pfdef}.
We want to interpret ${\mathcal P}$ as a map from $H$ into $H$, so by slight abuse
of notation we denote the function
\[
S_0 \ni (x,v) \mapsto {\mathcal P} f (E(x,v), L(x,v))
\]
also by ${\mathcal P} f$.
Then for all $f,g \in H$,
\[
\langle{\mathcal P} f,g\rangle_H
=
4 \pi^2 \int_0^{\infty} \int_{\psi_L(r_L)}^{E_0}
\frac{T(E,L)}{\vert \phi' (E) \vert} {\mathcal P} f (E,L) \,
{\mathcal P} g (E,L) \,d E \,d L
=\langle f,{\mathcal P} g\rangle_H,
\]
in particular, $\langle{\mathcal P} f,f\rangle_H = ||{\mathcal P} f||_H^2$, and hence
$||{\mathcal P} f||^2_H \leq  ||f||_H^2$ which means that ${\mathcal P}$ maps $H$ into itself.
Since ${\mathcal P} {\mathcal P} f = {\mathcal P} f$, ${\mathcal P}$
is the orthogonal projection onto the closed subspace
\[
\begin{split}
K_{{\mathcal T}}
:=&
\{ f\in H \mid  \exists\, g \colon {\mathbb R}^2 \to {\mathbb R}:\ f(x,v) = g(E(x,v), L(x,v))
\ \mbox{a.~e.\ on}\ S_0  \}\\
=&
\{ f\in H \mid f(x,v)={\mathcal P} f(E(x,v),L(x,v)) \ \mbox{a.~e.\ on}\ S_0 \}.
\end{split}
\]
Projection operators similar to ${\mathcal P}$ have for example been used
in \cite{bost,GuLi,LeMeRa1}.

\noindent
\begin{proof}[Proof of Theorem~\ref{jeans}, i.e., $\mathrm{ker}\,{\mathcal T} = K_{{\mathcal T}}$]
	We first show the easy inclusion.
	Let $f \in K_{{\mathcal T}}$, i.e., for some $g \colon {\mathbb R}^2 \to {\mathbb R}$,
	$f(z) = g(E(z), L(z))$ a.~e.\ on  $S_0$; we abbreviate $z=(x,v)$.
	Since $E$ and $L$ are conserved along characteristics, we find that
	for every $\xi \in C^1_c (S_0)$,
	\[
	\int_{S_0} \frac 1 {\vert \phi' (E(z)) \vert} f(z) \,
	\xi ( (X,V)(t,z) ) \,dz 
	= \int_{S_0} \frac 1 {\vert \phi' (E(z)) \vert} f(z) \,
	\xi (z) \,dz .
	\]
	Thus
	\[
	0 
	=\frac{d}{dt}  \left[ \int_{S_0} \frac 1 {\vert \phi' (E(z)) \vert}
	f(z) \, \xi ((X,V)(t,z)) \,dz \right] \big|_{t=0} 
	=
	\int_{S_0} \frac 1 {\vert \phi' (E(z)) \vert} f(z) \,
	\, {\mathcal T} \xi (z) \,dz.
	\]
	By Definition~\ref{Tweakdef}, this means that
	$f\in D ({\mathcal T})$ with ${\mathcal T} f = 0$, i.e., $f \in \ker \, {\mathcal T}$.
	
	As to the other inclusion, let $f \in \ker\,{\mathcal T}$, i.e.,
	$f\in D ({\mathcal T})$ with ${\mathcal T} f = 0$. As stated above, we will show $f\in K_{{\mathcal T}}$
	by approximation. We will split this argument into several steps.
	
	\noindent
	\textit{Step 1: Reduction to a compact support.}
	As before, let $\chi_k \in C^{\infty} ({\mathbb R})$ be a smooth,
	increasing cut-off function with
	\[
	\chi_k (s) = 0\ \mbox{for}\  s \leq \frac 1{2k}, \qquad
	\chi_k (s) = 1\ \mbox{for}\ s \geq \frac 1k 
	\]
	for each $k\in{\mathbb N}$. Now set
	\[
	f_k(z) := {\chi}_k ( L(z)) \, {\chi}_k (E_0 - E(z)) \, f(z)
	\]
	for $z \in S_0$ and $k\in{\mathbb N}$. Since $f_k \to f$ in $H$ as
	$k\to\infty$ and since $K_{{\mathcal T}}$ is closed, it suffices to show
	$f_k \in K_{{\mathcal T}}$ for every $k\in{\mathbb N}$ to conclude $f\in K_{{\mathcal T}}$. 
	Thus, we assume that there exists $m\in{\mathbb N}$ such that for a.~e.\
	$z \in S_0$ with $f(z) \neq 0$ we have $L(z) \geq \frac 1m$ and
	$E(z) \leq E_0 - \frac 1m < 0$.
	
	\noindent
	\textit{Step 2: Approximation like in Proposition~\ref{transapprox}.} 
	We can construct an approximating
	sequence $(F_k)_{k\in{\mathbb N}} \subset C^{\infty}_{c,r} (S_0)$ such that
	\[
	F_k \to f \ \mbox{and}\ {\mathcal T} F_k \to {\mathcal T} f = 0 \ \mbox{in}\ H
	\ \mbox{as}\ k\to\infty,
	\]
	where
	\[
	{\mathrm{supp}\,} F_k \subset \left\{z \in S_0 \mid L(z) \geq \frac 1{2m} \right\} ,\quad
	k\in{\mathbb N}.			
	\]
	Furthermore, $F_k^r \in C^{\infty}_c (S_0^r)$ for every $k\in{\mathbb N}$.
	
	\noindent
	\textit{Step 3: An auxiliary estimate.} 
	In order to prove that the distance between $F_k$ and its projection
	${\mathcal P} F_k$ tends to zero as $k\to\infty$ we first estimate
	the distance between a smooth function and its projection onto the
	space of functions depending only on $E$ and $L$.
	Let $\xi \in C^1_{c,r} (S_0)$ with $\xi^r \in C^1_c (S_0^r)$ be arbitrary,
	but fixed. We will use the abbreviation 
	\[
	\zeta (t,E,L) := \xi^r ( (R,W)(t \, T(E,L), r_-(E,L), 0, L),L )
	\]
	for $t\in{\mathbb R}$, $L>0$ and $\psi_L (r_L) < E < E_0$.
	For these $(E,L)$ we may therefore write
	\[
	{\mathcal P} \xi (E,L) = \int_0^1 \zeta (t,E,L) \,dt.
	\]
	By a slight abuse of notation, we denote the mapping
	\[
	(x,v) \mapsto {\mathcal P} \xi (E(x,v), L(x,v)),	
	\]
	which is defined a.~e.\ on $S_0$, by ${\mathcal P} \xi$.
	By changing to $(t,E,L)$-coordinates,
	\[
	\begin{split}
	&
	\| \xi - {\mathcal P} \xi \|_H^2 \\
	&
	= 4 \pi^2 \int_0^{\infty} \int_{\psi_L(r_L)}^{E_0}
	\frac{T(E,L)}{\vert \phi' (E) \vert}
	\int_0^1 \left| \zeta(t,E,L) - \int_0^1 \zeta (s,E,L) \,ds \right|^2 \,dt
	\,dE \,dL\\
	&
	\leq  4 \pi^2 \int_0^{\infty} \int_{\psi_L(r_L)}^{E_0}
	\frac{T(E,L)}{\vert \phi' (E) \vert}
	\int_0^1 \int_0^1 \left| \zeta(t,E,L) -  \zeta (s,E,L) \right|^2 \,ds \,dt \,
	dE \,dL,
	\end{split}
	\]
	where we used the Cauchy-Schwarz inequality in the last step.
	To estimate the inner two integrals, we first consider the case
	$s \leq t$:
	\[
	\begin{split}
	\int_0^1 \int_0^t & \left| \zeta(t,E,L) -  \zeta (s,E,L) \right|^2 \,ds \,dt
	= \int_0^1 \int_0^t \left| \int_s^t \partial_{\tau} \zeta (\tau,E,L) \,d\tau
	\right|^2 \,ds \,dt \\
	& \leq \int_0^1 \int_0^t (t-s)
	\int_s^t \left| \partial_{\tau} \zeta (\tau,E,L) \right|^2 \,d\tau  \,ds \,dt
	\leq \int_0^1 \left| \partial_{\tau} \zeta (\tau,E,L) \right|^2 \,d\tau,
	\end{split}
	\]
	where we again used the Cauchy-Schwarz inequality.
	Estimating the part where $s > t$ in a similar way, we obtain
	\[
	\| \xi - {\mathcal P} \xi \|_H^2
	\leq 8 \pi^2 \int_0^{\infty} \int_{\psi_L(r_L)}^{E_0}
	\frac{T(E,L)}{\vert \phi' (E) \vert}
	\int_0^1 \left|  \partial_{\tau} \zeta (\tau,E,L) \right|^2 \,d\tau \,dE \,dL .
	\]
	By definition
	\[
	\partial_{\tau} \zeta (\tau,E,L) = T(E,L) \,
	( {\mathcal T}^r \xi^r ) ( (R,W)(\tau \, T(E,L), r_-(E,L), 0, L),L )
	\]
	for $\tau \in{\mathbb R}$, $L>0$, and $\psi_L (r_L) < E < E_0$.
	Using \eqref{Test} we arrive at the estimate
	\[
	\| \xi - {\mathcal P} \xi \|_H^2    
	\leq  32 \pi^4  M_0^4 \int_0^{\infty} \int_{\psi_L(r_L)}^{E_0}
	\frac{T(E,L)}{\vert \phi' (E) \vert}
	\frac 1 {E^4 L} \int_0^1 \left|  ({\mathcal T} \xi)^r (\ldots) \right|^2 d\tau
	\,dE \,dL,
	\]
	where $\ldots$ stands for
	$(R,W)(\tau \, T(E,L), r_-(E,L), 0, L),L$.
	
	\noindent
	\textit{Step 4: Conclusion.} 
	We apply the estimate from the previous step to the elements of the
	approximating sequence. Due to the properties of their support
	we obtain the bound
	\[
	\frac 1{E^4 L} \leq \frac {2m}{E_0^4}
	\]
	for all $L> 0$ and $\psi_L (r_L) \leq E < E_0$ for which there exists
	$\tau \in{\mathbb R}$ and $k\in{\mathbb N}$ such that
	$0 \neq ({\mathcal T} F_k)^r ( (R,W)(\tau \,T(E,L), r_-(E,L), 0, L),L )$.
	Using this bound and changing back into $(x,v)$-coordinates
	we arrive at the estimate
	\[
	\begin{split}
	\| F_k - {\mathcal P} F_k \|_H^2
	&\leq
	64 \pi^4  M_0^4 \frac m{E_0^4}
	\int_0^{\infty} \int_{\psi_L(r_L)}^{E_0}\!
	\frac{T(E,L)}{\vert \phi' (E) \vert}
	\int_0^1 \!\left|  ({\mathcal T} F_k)^r (\ldots) \right|^2 d\tau \,dE \,dL \\
	&=
	16 \pi^2 M_0^4 \frac{m}{E_0^4} \| {\mathcal T} F_k \|_H^2 \to 0\ \mbox{as}\ k\to\infty .
	\end{split}
	\]
	Since $F_k \to f$, we obtain ${\mathcal P} F_k \to f$ in $H$
	for $k\to\infty$ as well. Since $K_{{\mathcal T}}$ is a closed subspace of $H$
	and $({\mathcal P} F_k)_{k\in{\mathbb N}} \subset K_{{\mathcal T}}$,
	we conclude that $f\in K_{{\mathcal T}}$,
	and the proof is complete. 
\end{proof}


\begin{thebibliography}{0}
\bibitem{And05}
{\sc Andr\'{e}asson, H.},
The Einstein-Vlasov System/Kinetic Theory.
{\em Living Rev.\ Relativ.}\ {\bf 14:4.},
https://doi.org/10.12942/lrr-2011-4 (2011).

\bibitem{BFH}
{\sc Batt, J., Faltenbacher, W., Horst, E.},
Stationary spherically symmetric models in stellar dynamics.
{\em Arch.\ Rational Mech.\ Anal.}\ {\bf 93}, 159--183  (1986).

\bibitem{BT}
{\sc Binney, J., Tremaine, S.},
\textit{Galactic Dynamics}, Princeton University Press 1987.

\bibitem{bost}
{\sc Bostan, M.}, 
Transport equations with disparate advection fields. Application to the gyrokinetic models in plasma physics,
{\em J.\ Differential Equations}\ {\bf 249}, 1620--1663 (2010).

\bibitem{GuLi}
{\sc Guo, Y., Lin, Z.},
Unstable and stable galaxy models.
{\em Commun.\ Math.\ Phys.}
{\bf 279}, 789--813 (2008).

\bibitem{GuRe2007}
{\sc Guo, Y., Rein, G.},
A non-variational approach to nonlinear stability in stellar dynamics 
applied to the King model.
{\em Commun.\ Math.\ Phys.}\
{\bf 271}, 489--509 (2007).

\bibitem{HaLiRe}
{\sc Had\v{z}i\'{c}, M., Lin, Z., Rein, G.},
Stability and instability of self-gravitating relativistic
matter distributions. {\em Preprint}, arXiv:1810.00809 (2018).

\bibitem{IT68}
{\sc Ipser, J., Thorne, K.~S.},
Relativistic, spherically symmetric star clusters
I.\ Stability theory for radial perturbations.
{\em Astrophys.\ J.}\
{\bf 154}, 251--270 (1968).

\bibitem{LeMeRa1}
{\sc Lemou, M., Mehats, F., Rapha\"el, P.},
A new variational approach to the stability of gravitational systems.
{\em Commun.\  Math.\  Phys.}
{\bf 302}, 161--224 (2011).

\bibitem{RaRe}
{\sc Ramming, T., Rein, G.},
Spherically symmetric equilibria for self-grav\-itating
kinetic or fluid models in the non-relativistic and
relativistic case---A simple proof for finite extension.
{\em SIAM Journal on Mathematical Analysis},  {\bf 45}, 900--914 (2013).


\bibitem{Rein95}
{\sc Rein, G.},
{\em The Vlasov-Einstein System with Surface Symmetry},
Habili\-tations\-schrift, M\"unchen 1995.


\bibitem{Rein07}
{\sc Rein, G.},
Collisionless Kinetic Equations from
Astrophysics---The Vlasov-Pois\-son System.
{\em Handbook of Differential Equations, Evolutionary Equations}.
Eds.\ C.~M.~Dafermos and E.~Feireisl, Elsevier,
{\bf 3} (2007).

\bibitem{Rud}
{\sc Rudin, W.},
\textit{Functional Analysis}, McGraw-Hill 1973.


\bibitem{Sch}
{\sc Schaeffer, J.},
A class of counterexamples to Jeans' Theorem 
for the Vlasov-Einstein system.
{\em Commun.\ Math.\ Phys.}\ {\bf 204}, 313--327 (1999).

\bibitem{Str2019}
{\sc Straub, C.},
{\em Stability of the King model---a coercivity-based approach},
Master Thesis, Bayreuth 2019.

\end{thebibliography}
\end{document}